\documentclass[a4paper,10pt]{article}
\usepackage{amsmath,amssymb,amsfonts,amsthm}
\usepackage{color}


\usepackage{graphicx}

\newtheorem{theorem}{Theorem}[section]
\newtheorem{lemma}[theorem]{Lemma}
\newtheorem{definition}[theorem]{Definition}

\newtheorem{Proposition}[theorem]{Proposition}

\newcommand{\chp}{\mathbb{H}^{2}_{\mathbb{C}}}

\newcommand{\vs}{\vspace{0.2cm}}

\newcommand{\n}{\noindent}
\newcommand{\ben}{\begin{equation*}}
\newcommand{\een}{\end{equation*}}

\newcommand{\be}{\begin{equation}}
\newcommand{\ee}{\end{equation}}
\newcommand{\bea}{\begin{eqnarray}}
\newcommand{\eea}{\end{eqnarray}}
\newcommand{\nn}{\nonumber}

\addtolength{\hoffset}{-1.5cm}\addtolength{\textwidth}{1.5cm}\addtolength{\textheight}{25pt}\addtolength{\oddsidemargin}{0.8cm}\addtolength{\voffset}{-1.5cm}\linespread{1.0}
\begin{document}
\title{Proof of the area-angular momentum-charge inequality for axisymmetric black holes} 
\author{Mar\'ia E. Gabach Clement\footnote{E-mail: gabach@aei.mpg.de} , Jos\'e Luis Jaramillo\footnote{E-mail: Jose-Luis.Jaramillo@aei.mpg.de} and Mart\'in Reiris\footnote{E-mail: martin.reiris@aei.mpg.de} \\
\\
Max Planck Institute for Gravitational Physics,\\ (Albert Einstein Institute), Am M\"uhlenberg 1,\\ D-14476 Golm, Germany.}
\date{}
\maketitle

\begin{abstract}
We give a comprehensive discussion, including a detailed proof, of the area-angular momentum-charge inequality for axisymmetric black holes. We analyze the inequality from several  viewpoints,  in particular
including aspects with a  
theoretical interest well beyond the Einstein-Maxwell theory. 
\end{abstract}

\vspace{0.3cm}

\begin{center}
\begin{minipage}[c]{11cm}
\tableofcontents
\end{minipage}
\end{center}
\newpage

\section{Introduction}

The main result of this article is 
the following theorem
\begin{theorem}\label{coromots}
Let $S$ be either 
\begin{enumerate}
\item[{\rm I.}] a smooth stable axisymmetric marginally outer trapped surface embedded in a spacetime, satisfying the dominant energy condition, with non negative cosmological constant $\Lambda$, angular momentum $J$, charges $Q_{\rm E}$ and $Q_{\rm M}$ and area $A$, or, 
\item[{\rm II.}] a smooth stable axisymmetric minimal surface in a maximal data set, satisfying the dominant energy condition, with non-negative $\Lambda$, angular momentum $J$, charges $Q_{\rm E}$ and $Q_{\rm M}$ and area $A$. 
\end{enumerate}
Then 
\begin{equation}\label{ine}
A^2\geq 16\pi^2[4J^2+(Q_{\mathrm E}^2+ Q_{\mathrm M}^2)^2].
\end{equation}
Moreover, the equality in \eqref{ine} is achieved if and only if the surface is the extreme Kerr-Newman sphere (see section \ref{seckerr}). 
\end{theorem}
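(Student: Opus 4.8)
The plan is to reduce the geometric inequality to a sharp lower bound for a one-dimensional variational functional on the orbit space of the axial isometry, and then to recognize that functional as a harmonic-map energy whose minimizer is the extreme Kerr-Newman solution. The common starting point for both cases I and II is the \emph{stability inequality}: for a stable axisymmetric MOTS (case I) the stability operator associated with variations along the outgoing null normal has non-negative principal eigenvalue, while for a stable axisymmetric minimal surface in maximal data (case II) the second variation of area is non-negative. In both situations, feeding in the dominant energy condition, the constraint $\Lambda\geq 0$, and restricting the test functions to the axisymmetric ones, I would obtain a single inequality of the schematic form
\begin{equation*}
\int_S \left( |\nabla \alpha|^2 + K\, \alpha^2 \right) dA \;\geq\; \int_S \left( \beta_\omega + \beta_{\mathrm{em}} \right) \alpha^2 \, dA ,
\end{equation*}
valid for all axisymmetric $\alpha$, where $K$ is the Gauss curvature, $\beta_\omega\geq 0$ is built from the twist potential $\omega$ of the axial Killing field $\eta$, and $\beta_{\mathrm{em}}\geq 0$ from the electromagnetic potentials (the remaining non-negative matter terms being discarded via the energy conditions). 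Since $S$ is a stable axisymmetric surface under these hypotheses, its topology is that of a sphere, so Gauss-Bonnet gives $\int_S K\, dA = 4\pi$.

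Next I would install coordinates adapted to the symmetry. Writing the induced metric on the topological sphere $S$ as
\begin{equation*}
q = e^{2c-\sigma}\, d\theta^2 + e^{\sigma}\sin^2\theta\, d\varphi^2, \qquad \theta\in[0,\pi],
\end{equation*}
with $\sigma,c$ functions of $\theta$ only, the squared norm of the axial Killing field is $e^{\sigma}\sin^2\theta$ and the total area is $A = 2\pi\int_0^\pi e^{c}\sin\theta\, d\theta$. In these variables the angular momentum $J$ appears as a boundary value of the twist potential $\omega$, and the electric and magnetic charges $Q_{\mathrm E}, Q_{\mathrm M}$ as boundary values of the electromagnetic potentials $\psi,\chi$; all three are conserved and encode the total data. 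I would then choose $\alpha$ adapted to the conformal factor so as to convert the stability inequality into a lower bound for $A$: after integrating by parts and applying the concavity of the logarithm (a Jensen-type step), the inequality takes the form
\begin{equation*}
\ln\!\left(\frac{A}{4\pi}\right) \;\geq\; \frac{\mathcal{M}-8}{8},
\end{equation*}
where $\mathcal{M} = \mathcal{M}[\sigma,\omega,\psi,\chi]$ is an explicit functional of the axisymmetric potentials alone.

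The heart of the matter is that $\mathcal{M}$ is, up to normalization, the energy of a map from $S^2$ into the negatively curved symmetric space $\chp = SU(2,1)/S(U(2)\times U(1))$ with coordinates $(\sigma,\omega,\psi,\chi)$, which is the target of the symmetry-reduced Einstein-Maxwell sigma model. I would then prove the sharp bound $\mathcal{M}\geq\mathcal{M}_0$, where $\mathcal{M}_0$ is the value attained by the extreme Kerr-Newman harmonic map carrying the same $J, Q_{\mathrm E}, Q_{\mathrm M}$. The cleanest route exploits the non-positive sectional curvature of $\chp$: the harmonic energy is geodesically convex, so every critical map is a global minimizer, and a Bochner/completion-of-squares identity expresses $\mathcal{M}-\mathcal{M}_0$ as a manifestly non-negative quantity (essentially the squared geodesic distance in $\chp$ between the given configuration and the extreme one, integrated against the reference energy density). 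This convexity step, together with the careful treatment of the boundary behaviour of $\sigma,\omega,\psi,\chi$ at the poles $\theta=0,\pi$ required to validate the comparison, is where I expect the main technical obstacle to lie.

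Finally I would evaluate the reference value explicitly: the extreme Kerr-Newman sphere has area $A_0 = 4\pi\sqrt{4J^2+(Q_{\mathrm E}^2+Q_{\mathrm M}^2)^2}$, so that $4\pi\, e^{(\mathcal{M}_0-8)/8} = A_0$, and chaining this with the two displayed inequalities yields precisely \eqref{ine}. For rigidity, equality in \eqref{ine} forces equality at every step: the stability inequality must be saturated (so the chosen $\alpha$ is a genuine eigenfunction and the gradient and matter terms vanish as required), the Jensen step forces the conformal factor to be that of the extreme solution, and $\mathcal{M}=\mathcal{M}_0$ forces, by strict convexity away from the minimizer, the potentials $(\sigma,\omega,\psi,\chi)$ to coincide with those of extreme Kerr-Newman. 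Tracing these conditions back shows that $S$ is isometric to the extreme Kerr-Newman sphere of section \ref{seckerr}, completing the proof.
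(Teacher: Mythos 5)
Your proposal is correct and follows essentially the same route as the paper's first proof: the stability condition for the MOTS or minimal surface yields $A \geq 4\pi e^{(\mathcal{M}-8)/8}$ (Lemma \ref{lema1}, with probe function $\alpha=e^{c-\sigma/2}$ and $c$ a \emph{constant}, so $A=4\pi e^{c}$ exactly and no Jensen step is needed), and the sharp bound $\mathcal{M}\geq\mathcal{M}_0$ with rigidity is obtained from the harmonic-map structure of $\mathcal{M}$ with target $\mathbb{H}^2_{\mathbb{C}}$. The pole behaviour you flag as the main obstacle is precisely what the paper's argument supplies: the Hildebrandt--Kaul--Widman Dirichlet minimization on compact subdomains combined with a cutoff interpolation toward the extreme Kerr--Newman data for the lower bound, and the Schoen--Yau subharmonicity of the distance between harmonic maps plus the maximum principle for uniqueness.
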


This type of relation among physical parameters of black holes plays a relevant role in the context of the standard picture of classical gravitational collapse \cite{Pen73}. In this sense,  the works  of Penrose (see the review \cite{Mars09} on Penrose inequality) offer a paradigmatic example with the proposal of a lower bound for the total mass in terms of the size (area) of the black hole in the form $m^2\geq A/16\pi$. The efforts to formulate similar geometrical inequalities incorporating the angular momentum of the black hole have led to two different lines of research. The first one,  started in \cite{Dainmass08} and followed in \cite{ChruscielCosta09}, \cite{Costa09}, is of global nature and provides a lower bound to the total mass in terms of the angular momentum and charges in a vacuum black hole spacetime
\begin{equation}\label{MJQ}
 m^{2}\geq \frac{|J|^{2}}{m^{2}}+Q_{\rm E}^{2}+Q_{\rm M}^{2} \ .
\end{equation}
The second line of research leads to inequality (\ref{ine}), which  presents a quasilocal character
in the sense that only the geometry on a closed surface is involved in the analysis.

The first explicit lower bounds for the area solely in terms of black hole 
physical parameters, including the angular momentum, were given in \cite{HenAnsCed08,Hennig10,Ansorg:2010ru} 
(see also \cite{AnsorgPfister08}) in stationary black holes, 
and later in \cite{Acena11},  \cite{DainReiris11} and \cite{JaramilloReirisDain11} within dynamical scenarios 
(see also \cite{Gabach11}, \cite{DainJaramilloReiris11}, \cite{Simon:2011zf},
\cite{Hollands:2011sy}, \cite{GabachJaramillo11} and \cite{Jaramillo11pro} and 
the review article \cite{Dainreview11}  on the subject).

In the recent article \cite{GabachJaramillo11}, a first straightforward approach to prove inequality (\ref{ine}) in the dynamical case was presented. It consists in matching the variational problem discussed in \cite{Hennig10} for the stationary axisymmetric case  with the dynamical quasilocal treatment in \cite{JaramilloReirisDain11} (see also \cite{Chrusciel11} and \cite{Mars:2012sb} for further
clarification on the relation between the stationary and the dynamical quasilocal approaches). 
More specifically, as shown in \cite{GabachJaramillo11}, the proof of the strict case in point I of Theorem \ref{coromots} with vanishing magnetic charge $Q_{\mathrm M}=0$ follows directly from the proof in \cite{Hennig10} under the assumption of strict stability. We  note that the rigidity result is lacking in \cite{GabachJaramillo11}. We would like to mention that as this article was written, we have learned that the inclusion of the marginally stable case may also be done \cite{Ceder12} following the same procedure as in  \cite{Hennig10}, and whose resolution would lead to the full inequality (\ref{ine}).  

It is remarkable that both inequalities, \eqref{ine} and \eqref{MJQ} can be obtained via a variational 
principle involving energy \textit{flux} functionals \cite{Jaramillo11pro}. Although both procedures can be carried over without reference
to one another, the similitude between the functionals suggests a deeper relation between them,
and ultimately, a possible relation between the inequalities themselves.

In this article we pursue three goals. The first one is to establish and give a detailed proof of the AJQ inequality, completing and extending the analysis in \cite{GabachJaramillo11}. This is relevant for several reasons, namely, it gives information about the allowed values of the physical parameters for black holes. In particular it shows that, even in non-vacuum dynamical scenarios, the relations between these basic  parameters remain simple. Also, it puts in evidence the special role of extreme Kerr-Newman black hole, a fact that might shed light on the  stability of black holes. Finally, the relevance of this type of inequalities in the study of multiple black hole configurations, and as a powerful tool to probe known solutions was made clear in the work of Neugebauer et al. \cite{Neugebauer11} (see also \cite{Chrusciel11}), where they strongly use the uncharged version of \eqref{ine} to prove by contradiction, that two rotating black holes do not exist in equilibrium. 

The second goal of the present article is to gain insights about the underlying mechanisms leading to such an inequality as \eqref{ine}. In this respect, we expose two different approaches to the AJQ inequality. One of them relaxes to certain extent the axial symmetry assumption and makes use of harmonic maps between the surface and the complex hyperbolic space $\mathbb H^2_{\mathbb C}$. The second approach makes use of geodesics in $\mathbb H^2_{\mathbb C}$. Both approaches implement a minimization procedure which lead to Theorem \ref{coromots}, a procedure which seems to be needed due to the presence of the angular momentum (cf. \cite{DainJaramilloReiris11,Simon:2011zf}, where an inequality between area and electric and magnetic charges is obtained without axial symmetry and where no variational problem is formulated). 

The third goal of the article is to show how stable marginally outer trapped surfaces (MOTS's) and stable minimal surfaces over maximal surfaces can be treated on the same footing in the study of these quasilocal inequalities. Since the global characterization of a black hole in terms of notions such as the event horizon is of little practical use in the present quasilocal context, we must resort to quasilocal objects to represent black holes. Both, MOTS's and minimal surfaces has been extensively studied and used in the literature as signatures of the presence of a black hole region, at least in strongly predictable spacetimes \cite{HawEll73}, and more precisely, in the study of quasilocal inequalities, but, as far as we know, no link was established between the two types of surfaces in this context.  Although in the generic case there are fundamental differences between minimal surfaces and MOTS \cite{AnderssonMetzger09}, in this article we point out that their respective notions of stability crucially lead (in axisymmetry) to the same integral characterization and ultimately, to the same inequality.

Although much has been done during the last few years  in the field of geometrical inequalities for black holes, there are still many open questions to be studied. One of them is the possible explicit inclusion of the cosmological constant into the inequalities, in the presence of angular momentum (the
area-charge case has already been addressed in \cite{Simon:2011zf}). We emphasize that our result, Theorem \ref{coromots} allows the spacetime to have a non-negative $\Lambda$, but this quantity does not enter into the inequality \eqref{ine}. So we wonder how is inequality \eqref{ine} modified by its explicit introduction, and moreover, what happens with the negative Lambda case. Results 
in \cite{Simon:2011zf} provide a first step in this direction.

Another issue that must be better understood is the connection between the two types of inequalities mentioned above, \eqref{ine} and \eqref{MJQ}. We give some insights in this article (see the appendix), but there are many issues that are not entirely yet clear. This is not an easy problem, since it involves linking global and quasilocal viewpoints. It would be, however, very desirable, since its full resolution would give a concrete probe to compare with the Penrose inequality.

Finally, we want to mention that this type of quasilocal inequalities has been discussed in a broader context lately, mainly by Dain, \cite{Dainreview11}, \cite{Dainpersonal}, and we are forced to wonder about the universal validity of such a relation. Within the context of electrovacuum black holes, in this article we give a first step by studying the case of general surfaces within maximal initial data (that is, surfaces that are not necessarily minimal), and prove its validity. We understand that there is much work to do in order to generalize the results presented here to ordinary objects. Nevertheless, due to the rigidity statement in Theorem \ref{coromots}, and to the special properties of black holes in nature, one might expect that the extreme Kerr-Newman sphere should play a key role also in the general setting. We believe this will be an active field of research during the next years.

The article is organized as follows. In section \ref{settings} we introduce the basic 
elements needed for the statement of our main result. This includes the formal definitions of angular momentum and charges of a surface within the Einstein-Maxwell-matter theory and an outline of stable axially symmetric marginally outer trapped surfaces and stable axially symmetric minimal surfaces over maximal slices. In particular, as we mentioned above, we will show that the stability condition for both surfaces leads to the same integral characterization. Finally but crucially, 
we identify a set of suitable potentials to describe the gravitational and electromagnetic
 fields which proves to be useful for handling the variational problem needed to establish inequality \eqref{ine}.

In section \ref{main} we present the main partial results leading to Theorem \ref{coromots}, which are written up in the form of three Lemmas, \ref{lema1}, \ref{teo}, and \ref{unic} and give, respectively, a lower bound to the area in terms of a functional on the 2-sphere, an absolute lower bound to this functional, and the rigidity statement. Moreover, in section \ref{application}, we present an interesting application  to black hole initial data which intends to study the general validity of the AJQ inequality for black hole spacetimes. In section \ref{seckerr} we  study the so-called extreme Kerr-Newman sphere, pointing out the MOTS and minimal surface viewpoints and its connection. We also give an 
interesting geometric description of the extreme Kerr-Newman horizon geometry in terms of semicircles in the complex hyperbolic space. 

In section \ref{avenidas} we present the proof of Theorem \ref{teo}. We do so by following two approaches and highlighting different aspects of the underlying structure. The first one, in section \ref{prueba1} makes contact with harmonic maps, and the second one, in section \ref{prueba2} solves the minimization problem by identifying the minimizers with geodesics in the complex hyperbolic space.

We also include an appendix where we discuss the possible relation between quasilocal and global inequalities. 

\section{Settings}\label{settings}

In this section we introduce the objects that will be used as the black hole signatures, namely stable marginally outer trapped surfaces and stable minimal surfaces. We will expose their main properties and, more importantly, we will show how, under the axisymmetry hypothesis, the stability notions for both types of surfaces lead to a single inequality from which \eqref{ine} is obtained. In order to do so, we begin with a brief outline of closed surfaces embedded in a spacetime, their intrinsic and extrinsic geometry and the physical quantities one can associate to them. 

Let ($\mathcal V, g_{ab}$) be a spacetime satisfying the Einstein equations
\begin{equation}
  \label{eq:3a}
  G_{ab}=8\pi (T^{EM}_{ab}+T^M_{ab})-\Lambda g_{ab} \ ,
\end{equation}
where $G_{ab}:=R_{ab}-\frac{1}{2}Rg_{ab}$ is the Einstein tensor,  $g_{ab}$ and $\nabla_a$ are the spacetime metric and its Levi-Civita connection respectively, $ \Lambda\geq0$ is a non-negative cosmological constant and
 we have decomposed the stress-energy tensor $T_{ab}$ into  its electromagnetic $T^{EM}_{ab}$ and non-electromagnetic $T^M_{ab}$ components. We assume that the latter satisfies the dominant energy condition.

Consider a spacelike surface $S$ embedded in the spacetime, with induced metric $q_{ab}$ and Levi-Civita 
connection $D_a$. Let $\ell^a$ and $k^a$ be future-oriented null vectors normal to $S$ such that $\ell_ak^a=-1$ and 
$\ell^a$ is outward-pointing. Regarding the extrinsic curvature elements, we introduce the expansion associated to  $\ell^a$, $\theta^{(\ell)}:=q^{ab}\nabla_a\ell_b$, the shear tensor $\sigma_{ab}^{(\ell)}:=q^c_aq^d_b\nabla_c\ell_d-\frac{1}{2}\theta^{(\ell)}q_{ab}$, and the normal fundamental form $\Omega_a^{(\ell)}:=-k^c{q^d}_a\nabla_d\ell_c$. It is important to remark that the normalization condition on the null normals $\ell^a$, $k^a$ leaves a boost rescaling freedom: $\ell'^a=f \ell^a, k'^a=f^{-1} k^a$ under which $\theta^{(\ell)}$ and $\Omega_a^{(\ell)}$ transform respectively as $\theta^{(\ell')}=f\theta^{(\ell)}$ and  $\Omega_a^{(\ell')}= \Omega_a^{(\ell)}+D_a\ln f$.

Although the main inequality can be understood more naturally in the context of globally axisymmetric black hole space-times,  it is remarkable that in fact, only very little quasilocal (rather than global) axisymmetry is necessary for its validity. For this reason we give here the most basic notion of {axisymmetry} under which (\ref{ine}) is valid.

We say that the closed surface $S$ is {\it axisymmetric} if there exists 
a  Killing vector field $\eta^{a}$ on ${\cal S}$, i.e.
${\mathcal L}_{\eta}q_{ab}=0$,   with closed integral curves and 
normalized so that its integral curves have an affine length of $2\pi$, 
and such that
\begin{equation}
{\mathcal{L}}_{\eta} \Omega_a^{(\ell)} = {\mathcal{L}}_{\eta} \Pi(A_a) = {\mathcal{L}}_{\eta} E_{\perp}={\mathcal{L}}_{\eta} B_{\perp}=0.
\end{equation}
Above $A_{a}$ is the electromagnetic potential  given by $F_{ab}=\nabla_aA_b-\nabla_bA_a$,  $F_{ab}$ is the electromagnetic field tensor, $\Pi(A_{a})$ is the pullback of the form $A_{a}$ to the tangent space of $S$ and $E_{\perp}$ and $B_{\perp}$ are the electric and magnetic fluxes 
across $S$, given by
\begin{equation}\label{emfields}
E_\perp:=\ell^a k^b F_{ab} \ ,\qquad B_\perp:=\ell^a k^b {}^*\!F_{ab} \ ,
\end{equation}
 where ${}^*\!F_{ab}$ is the dual of $F_{ab}$. Note that $E_\perp$ and $B_\perp$ are independent of a conformal rescaling
of the null normals. 

If $S$ is axisymmetric, then   we define the  projection of $\Omega_a^{(\ell)}$ along the Killing vector $\eta^a$ , $\Omega^{(\eta)}_a:=\eta^b\Omega^{(\ell)}_b\eta_a/\eta$, where $\eta:=\eta^a \eta_a$. Crucially, 
$\Omega^{(\eta)}_a$ is then divergence-free and therefore invariant under 
null normal rescalings preserving the axisymmetry.

\subsection{Angular momentum and electromagnetic charges}\label{angular}
We now introduce three physical quantities\footnote{Note that the sign convention in this article
is consistent with that in \cite{AshFaiKri00}, \cite{Weinstein96}  and opposite to that
in \cite{GabachJaramillo11}, \cite{Booth08}. 
This does not affect the inequality \eqref{ine}, that involves only quadratic expressions.
}  
associated with a surface $S$ 
in the context of Einstein-Maxwell-matter theory, namely, the charges and the angular momentum.
  
 Following \cite{AshFaiKri00} we write the electric and magnetic charges of the surface $S$,
 respectively,  as 
\begin{equation}\label{cargacampos}
Q_{\mathrm E} = Q_{\mathrm E}(S) := -\frac{1}{4\pi}\int_S E_\perp dS \ ,\qquad Q_{\mathrm M}  =
 Q_{\mathrm M}(S):=-\frac{1}{4\pi}\int_S B_\perp dS \ ,
\end{equation}
where $dS$ is the area element of $S$. By integrating Maxwell's equations $j^a=\nabla_bF^{ba}$ and $0=\nabla_b{}^*\!F^{ba}$, where $j^a$ is the electric charge current, we have the conservation  law 
\begin{equation}
Q_{\mathrm E}(\partial \Sigma)= \int j^a n_adV,\ \  Q_{\mathrm M}(\partial \Sigma)= 0 \ ,
\end{equation}
where $\Sigma$ is a spatial 3-slice with boundary $\partial \Sigma$,  $n^b$ is the unit normal vector to $\Sigma$ and $dV$ is the volume element in $\Sigma$.  In particular, this shows that in the absence of matter between two surfaces $S$, $S'$ the charges are conserved, i.e. $Q(S)=Q(S')$.

If the surface $S$ is axially symmetric with axial Killing vector $\eta^a$, then one can define \cite{Ashtekar:2001is,Booth08} a
\textit{canonical angular
 momentum}  of $S$ given, within the Einstein-Maxwell-matter context, by

\begin{equation}\label{j}
J = J(S):=- \frac{1}{8\pi}\int_S\Omega_a^{(\ell)}\eta^a dS - \frac{1}{4\pi}\int_S A_a\eta^aE_\perp dS.
\end{equation}
If the axial vector $\eta^a$ is the restriction of a global spacetime axisymmetric vector,
 $J$ can be expressed as \cite{Carter72} 
\begin{equation}\label{j_Komar} 
J(S) = - \frac{1}{16\pi} \int_{\cal S} \nabla^b\eta^a dS_{ab} - \frac{1}{4\pi}\int_S A_a\eta^aE_\perp dS \ ,
\end{equation}
with $dS_{ab} = (\ell_a k_b - k_a \ell_b) dS$. Note that, as given by \eqref{j_Komar} $J(S)$ is well defined for an
arbitrary non-necessarily axisymmetric surface $S$. The first term is the so-called
Komar angular momentum
 $J_K$, i.e. $J_K := - \frac{1}{16\pi} \int_{\cal S} \nabla^b\eta^a dS_{ab}$.
Moreover,  we also have
\cite{Carter72,gourgoulhon20123+1}
\bea
\label{e:conservation}
J(\partial\Sigma) = -\int_\Sigma T^M_{ab} \eta^an^bdV  -\int_\Sigma \eta^a A_a j_b n^bdV .
\eea
Therefore, in the absence of matter between  surfaces $S$ and $S'$, the angular
momentum (\ref{j_Komar}) is conserved, $J(S)=J(S')$.

\subsection{Stable MOTS's and stable minimal surfaces}\label{surfaces}

We recall here the definitions of stable marginally outer trapped surfaces in a given spacetime
and stable minimal surfaces over maximal slices, which are the main two types of surfaces we are interested in this article. 

We say that $S$ is a marginally outer trapped surface, or MOTS if $\theta^{(\ell)}=0$. 
Moreover, we  say that it is stable 
(or more precisely, spacetime stably outermost, 
according to the definition in \cite{JaramilloReirisDain11}; see also 
\cite{AndMarSim05,AndMarSim08,Hay94,Racz:2008tf}) if there exists 
an outgoing vector $X^a= \gamma \ell^a - \psi k^a$, with functions
$\gamma\geq0$, $\psi>0$, such that $\delta_X\theta^{(\ell)}\geq0$. Here $\delta_X$ 
denotes the deformation operator on $S$ \cite{AndMarSim05,AndMarSim08,Booth08} that controls the infinitesimal variations of geometric objects defined on $S$ under an infinitesimal deformation of the surface along the vector $X^a$.
$S$ is a {\em stable axisymmetric MOTS} if $S$ is axisymmetric and stable with 
axisymmetric $\gamma, \psi$. Given a stable axisymmetric MOTS $S$ with axial Killing vector $\eta^a$, the stability condition for $S$ is translated into the inequality (see \cite{JaramilloReirisDain11,Jaramillo:2012zi} for details)
\bea
\label{e:inequality_alpha}
\int_{\cal S} \left[|D\alpha|_q^2 + \frac{R_S}{2} \alpha^2  
 \right] dS \geq
 \int_{\cal S} \left[  |\Omega^{(\eta)}|_q^2  \alpha^2 +
 |\sigma^{(\ell)}|_q^2  \alpha \beta
+ G_{ab}\alpha\ell^a (\alpha k^b + \beta\ell^b) \right] dS \ ,
\eea  
valid for all  axisymmetric functions $\alpha$ on $S$.
Here $|\,\cdot\,|_q$ is the norm with respect to the  2-metric $q_{ab}$ on $S$, $R_S$ is the scalar curvature on $S$, and $\beta:=\alpha\gamma/\psi$. Use the Einstein equations \eqref{eq:3a} and insert the expression  $8\pi T^{EM}_{ab}\ell^ak^b = E_\perp^2+B_\perp^2$ (see  \cite{Booth08,DainJaramilloReiris11}
in \eqref{e:inequality_alpha} to obtain the inequality (see \cite{GabachJaramillo11})
\be
\label{e:inequality_alpha_AJQ}
\int_{S} \left[|D \alpha|_q^{2}+\frac{R_{S}}{2} \alpha^{2}\right] dS\geq \int_{S} \left[|\Omega^{(\eta)}|_q^{2} + E_{\perp}^{2}+B_{\perp}^{2}\right]\alpha^{2}dS.
\ee
To  pass from the stability condition \eqref{e:inequality_alpha} to inequality \eqref{e:inequality_alpha_AJQ} we have also discarded the 
non-negative shear term, the non-electromagnetic matter contribution (due to the energy condition) and the non-negative cosmological constant.

\vs 
We introduce now the second type of surfaces we are interested in this article, namely, stable axisymmetric minimal surfaces over  maximal slices. Consider a maximal initial data $(\Sigma, h_{ab},K_{ab},E_a,B_a)$ for the Einstein-Maxwell-matter system, where $h_{ab}$, $K_{ab}$ are the first and second fundamental forms of $\Sigma$ respectively, and $E_a:=F_{ab}n^b$, $B_a:={}^*\!F_{ab}n^b$ are the electromagnetic fields on $\Sigma$. As the datum is maximal we have $h^{ab}K_{ab}=0$. Suppose that $S\subset \Sigma$ is a minimal surface, namely one whose mean curvature (inside $\Sigma)$ is zero. The surface $S$ is stable if the second variation of the 
area is non negative, $\delta^2_{\alpha e_1}A\geq0$ for all functions $\alpha$. Suppose now that $S$ is an axisymmetric surface in the sense introduced before where $\ell=n+e_{1}$, $k=n-e_{1}$  and where $n$ is the space-time-normal to $\Sigma$ and $e_{1}$ one of the two normals to $S$ in $\Sigma$. Then the axisymmetric surface $S$ is stable if the second variation of the 
area is non negative, $\delta^2_{\alpha e_1}A\geq0$ for all axisymmetric functions $\alpha$. It is worth noting, to conciliate the stability definition for minimal surfaces and MOTS, that if for an axisymmetric minimal surface there is $\gamma>0$ such that $\delta_{\gamma e_{1}}\theta\geq0$ (where $\theta$ are the mean curvatures in $\Sigma$) then the surface is stable in the sense before.

Given a stable axisymmetric minimal surface $S$ in a maximal slice, the stability condition is translated into the standard
\be
\int_{S} \left[|D\alpha|^{2}+\frac{R_{S}}{2}\alpha^{2}\right] dS\geq \int_{S} \frac{1}{2}\left[R +|\hat{\Theta}|^{2}\right]\alpha^{2} dS
\ee
where $\hat{\Theta}$ is the trace-less part of the second fundamental form $\Theta$ and $R$ is the scalar curvature of the slice. Using the energy constraint $R=|K|^{2}+2(|E|^{2}+|B|^{2})+16\pi T^{M}_{ab}n^{a}n^{b}$ and that $\Omega^{(\eta)}_{a}\eta^{a}=-K(\eta,e_{1})$ we obtain after discarding some non-negative quadratic terms (in $|K|^{2}$ and $|E|^{2}+|B|^{2}$) and the non-negative term in $T^{M}_{ab}n^{a}n^{b}$, exactly the same 
inequality \eqref{e:inequality_alpha_AJQ} which was obtained for stable MOTS's.

\subsection{The quasilocal potentials}\label{potentials}

In this section we write the relevant components of the  intrinsic and extrinsic geometry of $S$, together with the electromagnetic field in terms of
 a set of potentials $\mathcal D=(\sigma,\omega,\psi,\chi)$ which are appropriate for applying the variational
 procedure which proves inequality \eqref{ine}.

Let $S$ be either an axisymmetric stable MOTS, or an axisymmetric stable
 minimal surface (in a maximal slice). We assume that over $S$ either $J$, $Q_{\rm E}$ or $Q_{\rm M}$ are non-zero, otherwise there
 is nothing to prove and the inequality \eqref{ine} is trivial. Choosing $\alpha=1$ in (\ref{e:inequality_alpha_AJQ}), and applying the  Gauss-Bonnet theorem, it follows that the Euler characteristic of $S$ is positive and therefore $S$ 
is topologically a sphere. Thus the metric over $S$ can be written uniquely in the 
form (see \cite{AshEngPaw04,DainReiris11})
\begin{equation}
\label{SDE}
ds^2= e^{2c-\sigma} d\theta^2 + 
e^\sigma\mathrm{sin}^2\theta d\varphi^2\ ,
\end{equation}
where $c$ is a constant.
With this choice of coordinate system, the area element and area of $S$ are given respectively by
$dS=e^c dS_0$, with $dS_0= \mathrm{sin}\theta d\theta d\varphi$, and $A= 4\pi e^c$. 
Moreover, the regularity of the metric  at the poles requires ${\sigma}|_{\theta=0,\pi}=c$.
In addition, the squared norm $\eta$ of the axial Killing vector 
$\eta^a$ is given by $\eta = e^\sigma \mathrm{sin}^2\theta$.

Regarding the 1-form $\Omega^{(\ell)}_a$, we write its Hodge decomposition in divergence-free
and exact parts (see \cite{JaramilloReirisDain11,Jaramillo11pro})
\begin{equation}\label{e:Omega_ell}
 \Omega^{(\ell)}_a =  \epsilon_{ab}D^b \tilde{\omega} + D_a \lambda \ ,
\end{equation}
for some regular functions $\tilde\omega$ and $\lambda$ on $S$. 
From the axisymmetry of $S$ it follows that this divergence-free part 
is given by $\Omega^{(\eta)}_a$. Explicitly
\bea\label{deftildeomega}
\Omega^{(\eta)}_\theta= 0\ ,   \qquad \Omega^{(\eta)}_\varphi = -e^{\sigma-c}\sin\theta\;\tilde\omega' \ , 
\eea
where the prime denotes derivative with respect to the variable $\theta$.
Next, let $\psi,\chi, \omega$ be regular functions of $\theta$ defined through the following expressions,
\begin{equation}\label{defpsichi}
\psi'= -E_\perp e^c\sin\theta,\qquad \chi'= -B_\perp e^c\sin\theta,
\end{equation}
\begin{equation}\label{defomega}
\omega'= 2\eta\tilde\omega' -2\chi\psi' +2\psi\chi'.
\end{equation}
It is remarkable that with this choice of potentials, the charges and angular momentum are given by the boundary
 values of $\psi$,  $\chi$ and $\omega$. To see this, use \eqref{cargacampos} and \eqref{defpsichi} to get 
\begin{equation}\label{condpotenciales1}
Q_{\mathrm E}=\frac{\psi(\pi)-\psi(0)}{2},\qquad Q_{\mathrm M}=\frac{\chi(\pi)-\chi(0)}{2}
\end{equation}
and use \eqref{j} and \eqref{defomega} to find
\begin{equation}\label{condpotenciales2}
J=\frac{\omega(\pi)-\omega(0)}{8}.
\end{equation}
Moreover, since the potentials $\psi$, $\chi$, $\omega$ are defined up to an additive constant, we assume, 
without lost of generality, that $\psi(\pi)=-\psi(0)=Q_{\mathrm E}$, $\chi(\pi)=-\chi(0)=Q_{\mathrm M}$ and
 $\omega(\pi)=-\omega(0)=4J$.

Finally, note that the function $\omega_K$ defined through $\omega_K' = 2\eta \tilde{\omega}'$ provides a potential for the Komar angular
momentum $J_K$ (cf. \cite{JaramilloReirisDain11}): $\omega_K(\pi)=-\omega_K(0)=4J_K$.

\section{Discussion of the main results}\label{main}

 We present here the main results leading to the AJQ inequality, a discussion about its more general validity and a detailed study of the unique minimizer for the area, namely, the extreme Kerr-Newman sphere. 

We begin by  stating three lemmas which, together, prove Theorem \ref{coromots}. Lemma \ref{lema1} establishes a lower bound for the area in terms of a bounded functional $\mathcal M$. This result comes up simply by rewriting the stability condition \eqref{e:inequality_alpha_AJQ} for the surface $S$ in terms of the set $\mathcal D$. The second statement, Lemma \ref{teo} gives an explicit sharp bound for the functional $\mathcal M$ in terms of the angular momentum and charges. These two Lemmas prove the AJQ inequality. The final statement, Lemma \ref{unic} proves that the AJQ inequality is saturated by a unique set $\mathcal D_0$  called extreme Kerr-Newman sphere. We will discuss this special set in section \ref{seckerr}.

Consider the stability condition \eqref{e:inequality_alpha_AJQ} valid for axisymmetric MOTS's and minimal surfaces (over maximal slices). Since it holds for any axisymmetric function $\alpha$, we take, as in \cite{DainJaramilloReiris11}, \cite{JaramilloReirisDain11}, \cite{GabachJaramillo11} the probe function
\bea
\label{e:alpha}
\alpha=e^{c-\sigma/2}.
\eea
Some insights about the reason for this choice of $\alpha$ is provided in section \ref{seckerr}. Then rewrite \eqref{e:inequality_alpha_AJQ} in terms of the potentials (\ref{defpsichi})-(\ref{defomega}) and use $A=4\pi e^c$  to arrive, following \cite{DainReiris11,JaramilloReirisDain11,GabachJaramillo11}, at a fundamental inequality  which is summarized in the following lemma \cite{GabachJaramillo11}.

\begin{lemma}\label{lema1}
Let $S$ be an axisymmetric stable MOTS or an axisymmetric stable minimal surface in  a maximal slice. Then 
\begin{equation}\label{ineqinicial}
A\geq 4\pi e^{\frac{\mathcal M-8}{8}},
\end{equation}
where $\mathcal M$ is given by 
\begin{equation}\label{defM}
\mathcal M:=\frac{1}{2\pi}\int\left[4\sigma+|D\sigma|^2+\frac{|D\omega+2\chi D\psi-2\psi D\chi|^2}{\eta^2}+4\frac{|D\psi|^2+|D\chi|^2}{\eta}\right]dS_0,
\end{equation}
and the norm $|\,\cdot\,|$ is taken with respect to the standard round metric on $S^2$.
\end{lemma}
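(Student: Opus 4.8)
The plan is to prove \eqref{ineqinicial} by inserting the probe function $\alpha=e^{c-\sigma/2}$ into the stability inequality \eqref{e:inequality_alpha_AJQ} and rewriting every term through the explicit metric \eqref{SDE} and the potentials $\mathcal D=(\sigma,\omega,\psi,\chi)$. First I would evaluate the left-hand side. From $q^{\theta\theta}=e^{\sigma-2c}$ and $\partial_\theta\alpha=-\tfrac12\sigma'\alpha$ one gets $|D\alpha|_q^2=\tfrac14(\sigma')^2$ and $\alpha^2=e^{2c-\sigma}$. The one laborious ingredient is the scalar curvature $R_S=2K$ of \eqref{SDE}: computing the Gauss curvature of this orthogonal metric (with $\sqrt{\det q}=e^c\sin\theta$) produces a combination of $(\sigma')^2$, $\sigma''$ and $\sigma'\cot\theta$, so that $\tfrac{R_S}{2}\alpha^2$ is explicit in $\sigma$ and its derivatives; adding $|D\alpha|_q^2$ gives a left integrand of the schematic form $-\tfrac14(\sigma')^2-\tfrac32\sigma'\cot\theta-\tfrac12\sigma''+1$.

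Next I would evaluate the right-hand side. Using \eqref{deftildeomega} one finds $|\Omega^{(\eta)}|_q^2\alpha^2=(\tilde\omega')^2$, and using \eqref{defpsichi} together with $\eta=e^\sigma\sin^2\theta$ one finds $E_\perp^2\alpha^2=(\psi')^2/\eta$ and $B_\perp^2\alpha^2=(\chi')^2/\eta$; the boost-dependent exact part $D_a\lambda$ of $\Omega^{(\ell)}_a$ never intervenes, since only $\Omega^{(\eta)}$ appears in \eqref{e:inequality_alpha_AJQ}. Substituting \eqref{defomega} for $\tilde\omega'$ turns $(\tilde\omega')^2$ into $|D\omega+2\chi D\psi-2\psi D\chi|^2/(4\eta^2)$, reproducing (up to overall factors) the third and fourth terms of $\mathcal M$ in \eqref{defM}.

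The decisive step is the integration over $S$ with $dS=e^c\,dS_0$ and $dS_0=\sin\theta\,d\theta\,d\varphi$: the constant $e^c$ factors out of both sides and the $\varphi$-integration contributes $2\pi$. I would then integrate the curvature pieces $-\tfrac12\sigma''$ and $-\tfrac32\sigma'\cot\theta$ (weighted by $\sin\theta$) by parts in $\theta$. Each boundary term carries a factor $\sin\theta$ and so vanishes at the poles by regularity of the metric; a further integration by parts of the surviving $\int\sigma'\cos\theta\,d\theta$ produces, through the regularity value $\sigma|_{\theta=0,\pi}=c$, the constant $2c$ together with the linear term $-\int\sigma\sin\theta\,d\theta$. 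I expect this bookkeeping to be the main obstacle: one must verify that the pole boundary terms genuinely vanish and that it is precisely the pole value of $\sigma$ that converts the singular-looking $\cot\theta$ term into the finite constant $2c$. Collecting everything, the stability inequality collapses to $8c\geq\mathcal M-8$.

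Finally, since $A=4\pi e^c$ and the exponential is increasing, $8c\geq\mathcal M-8$ gives $c\geq(\mathcal M-8)/8$ and hence $A=4\pi e^c\geq 4\pi e^{(\mathcal M-8)/8}$, which is \eqref{ineqinicial}. The purpose of the particular choice $\alpha=e^{c-\sigma/2}$ is exactly to make $|D\alpha|_q^2$ and $\tfrac{R_S}{2}\alpha^2$ combine so that the $\sigma$-dependence on the left reduces to the two terms $4\sigma+|D\sigma|^2$ of $\mathcal M$ plus an extractable constant.
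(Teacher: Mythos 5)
Your proposal is correct and follows essentially the same route as the paper: inserting the probe function $\alpha=e^{c-\sigma/2}$ into the stability inequality \eqref{e:inequality_alpha_AJQ}, rewriting both sides via the metric \eqref{SDE} and the potentials \eqref{deftildeomega}--\eqref{defomega}, and integrating by parts using $\sigma|_{\theta=0,\pi}=c$ to reach $8c\geq\mathcal M-8$ and hence \eqref{ineqinicial}. The paper only sketches this computation (deferring details to the cited references), and your bookkeeping of the curvature and boundary terms fills it in correctly.
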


A fundamental sharp lower bound for the functional ${\mathcal{M}}$ is stated in the following Lemma.
\begin{lemma}\label{teo}
Let $\mathcal D=(\sigma,\omega,\psi,\chi)$ be a regular set on $S^2$ with fixed values of $J$, $Q_{\rm E}$ and $Q_{\rm M}$. Then 
\begin{equation}\label{Mine}
e^{\frac{{\mathcal{M}}-8}{4}}\geq\ 4J^{2} + Q^{4},
\end{equation}
with $Q^2=Q_{\mathrm{E}}^2 + Q_{\mathrm{M}}^2$.
\end{lemma}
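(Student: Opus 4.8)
The plan is to exploit the geometric structure hidden in the functional $\mathcal M$ of \eqref{defM}: apart from the linear term $4\sigma$, its integrand is the energy density of a map into the negatively curved space $\chp$, and \eqref{Mine} is really a renormalized energy bound whose extremizer is the extreme Kerr--Newman configuration. Both the harmonic-map viewpoint (energy convexity) and the geodesic viewpoint arise naturally from this observation, and I would set up the argument so that they become two faces of the same minimization.

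First I would reduce to one dimension: since $(\sigma,\omega,\psi,\chi)$ are axisymmetric they depend only on $\theta$, so integrating out $\varphi$ turns $\mathcal M$ into an integral over $[0,\pi]$. Introducing $\bar\sigma:=\ln\eta=\sigma+2\ln\sin\theta$ the twist and charge terms acquire the clean weights $e^{-2\bar\sigma}$ and $e^{-\bar\sigma}$, so that
\begin{equation*}
G=d\bar\sigma^2+e^{-2\bar\sigma}\big(d\omega+2\chi\,d\psi-2\psi\,d\chi\big)^2+4e^{-\bar\sigma}\big(d\psi^2+d\chi^2\big)
\end{equation*}
is recognized as the metric of $\chp$, a Cartan--Hadamard (non-positively curved) symmetric space, evaluated on the map $\Phi=(\bar\sigma,\omega,\psi,\chi)$. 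Changing the independent variable to $s=\ln\tan(\theta/2)$ then absorbs the residual $\sin\theta$ weight and converts this part of $\mathcal M$ into the ordinary unweighted harmonic energy $\int_{\mathbb R}G(\Phi_s,\Phi_s)\,ds$ of a curve in $\chp$. A short integration by parts shows that all the remaining pieces — the linear $4\sigma$ term and the cross terms generated by the substitution — have no bulk dependence on $\Phi$: they collapse to an explicit constant plus boundary terms supported at the poles, controlled by the pole values of the potentials, i.e. by $J,Q_{\rm E},Q_{\rm M}$ through \eqref{condpotenciales1}--\eqref{condpotenciales2}. In particular the interior Euler--Lagrange equation of $\mathcal M$ is exactly the geodesic equation of $\chp$.

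The core is then a convexity/minimization step. Since $\chp$ is non-positively curved, the harmonic energy is convex along the geodesic homotopy $\lambda\mapsto\Phi_\lambda$ joining an arbitrary admissible $\Phi$ to the reference curve $\Phi_0$ corresponding to the extreme Kerr--Newman sphere; by the previous paragraph $\Phi_0$ solves the geodesic equation in the interior and is the unique regular critical point of $\mathcal M$ with the prescribed $J,Q_{\rm E},Q_{\rm M}$. Because the first variation of $\mathcal M$ vanishes at $\Phi_0$, convexity gives $\mathcal M[\Phi]\ge\mathcal M[\Phi_0]$, and a direct evaluation on the extreme configuration yields $\mathcal M_0=8+4\ln(4J^2+Q^4)$, which is precisely \eqref{Mine}. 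Equality then forces $\Phi=\Phi_0$, anticipating the rigidity of Lemma \ref{unic} and the equality case of \eqref{ine}.

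The hard part will be the non-compactness of the problem: the poles $\theta=0,\pi$ correspond to $\bar\sigma\to-\infty$, i.e. to boundary points of $\chp$ at $s=\pm\infty$, so both the harmonic energy and the boundary terms produced by the integration by parts are individually divergent. The delicate point is to organize the renormalization so that these divergences cancel and the comparison $\mathcal M[\Phi]\ge\mathcal M[\Phi_0]$ survives with a finite right-hand side; concretely, one must check that the geodesic homotopy can be chosen to preserve the prescribed asymptotics at the poles (hence $J,Q_{\rm E},Q_{\rm M}$) and that the boundary contributions generated along the homotopy have the correct sign. Establishing this, together with the explicit identification of the extreme Kerr--Newman geodesic $\Phi_0$ in $\chp$ and the computation of $\mathcal M_0$, is where essentially all the work lies.
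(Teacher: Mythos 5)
Your reduction of $\mathcal M$ to the (weighted) harmonic energy of a curve in $\chp$, the identification of the Euler--Lagrange equations with the geodesic equation, and the target value $\mathcal M_0=8+4\ln(4J^2+Q^4)$ all match the structure the paper actually uses (it gives two proofs, one via harmonic maps into $\chp$ and one via geodesics, and your plan is essentially a hybrid of the two). However, what you present is a strategy with the decisive step explicitly deferred, and that step is not a routine verification: it is the entire proof. The assertion ``the first variation of $\mathcal M$ vanishes at $\Phi_0$, so convexity along the geodesic homotopy gives $\mathcal M[\Phi]\ge\mathcal M[\Phi_0]$'' does not close as stated, for two concrete reasons. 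First, on $s\in(-\infty,\infty)$ the harmonic energy and the boundary term $4\ln\tan(\theta/2)$ are each divergent, so neither ``first variation'' nor ``convexity'' is defined without a renormalization scheme. Second, and more importantly, the competitor $\Phi$ and the reference $\Phi_0$ do \emph{not} share Dirichlet data: only $\omega,\psi,\chi$ are pinned at the poles by $J,Q_{\rm E},Q_{\rm M}$, while $\sigma|_{\theta=0,\pi}$ is free and generically differs between $\Phi$ and $\Phi_0$. So the geodesic homotopy is not a fixed-endpoint variation, and the standard convexity argument for maps into an NPC space does not apply directly. The fact that the pole values of $\sigma$ for the minimizer end up being \emph{determined} by $J,Q_{\rm E},Q_{\rm M}$ --- which is what makes \eqref{Mine} sharp and independent of $\sigma|_{\Gamma}$ --- is an output of the hard analysis, not an input you can assume.

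The paper fills this gap in two different ways, and you would need to execute one of them. In the harmonic-map route, one interpolates between $\mathcal D$ and $\mathcal D_0$ using the log-log cutoff $f_\epsilon$ with $\int_0^\infty|\partial_\rho f_\epsilon|^2\rho\,d\rho\to0$, so that the modified set agrees with $\mathcal D_0$ in a shrinking polar cap; this converts the problem into a genuine compact Dirichlet problem on $\Omega_{IV}$ where the Hildebrandt--Kaul--Widman theorem gives $\tilde{\mathcal M}^\epsilon_{\Omega_{IV}}\ge\tilde{\mathcal M}^0_{\Omega_{IV}}$, and the regularity conditions i)--ii) of Definition \ref{regularset} are then used term by term to show $\mathcal M^\epsilon_{\Omega_{II}}\to0$, hence $\mathcal M\ge\mathcal M^0$. (This route also proves the lemma for non-axisymmetric regular sets, which your reduction to one dimension silently discards even though the lemma is stated without axisymmetry.) In the geodesic route, one compares against explicit geodesic minimizers on $[\theta_1^i,\pi-\theta_1^i]$ with a second scale $\theta_2^i<\theta_1^i$, and the cancellation of the divergent $4\ln\tan(\theta/2)$ terms hinges on the precise asymptotics $\alpha^i\to2$ and $\lim(\theta^i/2)^{\alpha^i-2}=e^{\Gamma}$ of Lemma \ref{lema1ax}(3), which is where the dependence on $\sigma|_{\Gamma}$ drops out. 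Without carrying out one of these renormalization arguments, your proposal establishes the correct framework but not the inequality.
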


The proof of this result involves a minimization problem and can be approached in different ways, that will be discussed in full detail in section \ref{avenidas}. 

We want to emphasize that Lemma \ref{teo} does not assume axisymmetry on the set $\mathcal D=(\sigma,\omega,\psi,\chi)$. One of the proofs will deal with these non-necessarily axisymmetric sets (see section \ref{prueba1}). Finally, we give the precise definition of regular set mentioned in the Lemma. 

\vs

\begin{definition}\label{regularset} The set $\mathcal D=(\sigma,\omega,\psi,\chi)$ on $S^2$ is a regular set if the functions $\sigma$, $\omega$, $\psi$ and $\chi$ are $C^\infty$ on $S^2$, and moreover, we have the following behavior near the poles
\begin{enumerate}
\item[\rm {i)}] $\omega=\pm 4|J|+O(\sin^2\theta),\quad \psi=\pm Q_{\rm E}+O(\sin^2\theta),\quad \chi=\pm Q_{\rm M}+O(\sin^2\theta)$,

where the signs $+,-$ refer to the values at $\theta=\pi,0$ respectively.
\item[\rm {ii)}] $|D\omega+2\chi D\psi-2\psi D\chi|=O(\sin^ 3\theta)$.
\end{enumerate}
\end{definition}
We remark that if the functions $\omega,\psi,\chi$ arise from a smooth set of axisymmetric fields $\Omega^{(\eta)}_a$, $E_\perp$, $B_\perp$ via equations \eqref{deftildeomega}, \eqref{defpsichi}, \eqref{defomega}, then, they satisfy items i) and ii) of Definition \ref{regularset} automatically.

It is also important to stress that the Lemma \ref{teo} is also valid for smooth functions $\sigma, \omega, \psi, \chi$ such that they satisfy condition i) in the above definition and $\mathcal M$ is finite, condition ii) being no longer necessary. We will come back to this point in section \ref{prueba1}.

\vs 
The final result we present concerns the uniqueness of a regular set saturating inequality \eqref{ine}.
\begin{lemma}\label{unic}
There exists a unique regular set $\mathcal D$ saturating the AJQ inequality (\ref{ine}), with $A=4\pi e^{\sigma}\big|_{\theta=0,\pi}$ and it is the extreme Kerr-Newman sphere set $\mathcal D_0=(\sigma_0,\omega_0,\psi_0,\chi_0)$  given by
\begin{align}
\label{EKN1} \sigma_0&=\ln \frac{\left(2a_{0}^2+Q^2\right)^2}{\Sigma_0},\\
\label{EKN2} \omega_0&=-4J\frac{2a_{0}^ 2+Q^2}{\Sigma_0}\cos\theta,\\
\label{EKN3} \psi_0 &=-\frac{Q_{\mathrm E}(2a_{0}^2+Q^2)\cos\theta-Q_{\mathrm M} a_{0}\sqrt{a_{0}^2+Q^2}\sin^2\theta}{\Sigma_0},\\
\label{EKN4} \chi_{0}&=-\frac{Q_{\mathrm M}(2a_{0}^2+Q^2)\cos\theta-Q_{\mathrm E} a_{0}\sqrt{a_{0}^2+Q^2}\sin^2\theta }{\Sigma_0} 
\end{align}
with 
\begin{gather}
a_{0}=\frac{J}{m_{0}},\ \ m_{0}=\sqrt{\frac{Q^{2}+\sqrt{4J^{2}+Q^{4}}}{2}},\\
\Sigma_0 = Q^ 2+a_{0}^ 2(1+\cos^ 2\theta)
\end{gather}
\end{lemma}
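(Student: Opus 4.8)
The plan is to derive the rigidity from the two inequalities already at hand. Combining Lemma \ref{lema1} and Lemma \ref{teo} gives
\[
A^2 \geq 16\pi^2 e^{(\mathcal{M}-8)/4} \geq 16\pi^2(4J^2+Q^4),
\]
which is exactly \eqref{ine}. Equality in \eqref{ine} therefore forces equality simultaneously in \eqref{ineqinicial} and in the sharp bound \eqref{Mine}. The entire content of Lemma \ref{unic} is then encoded in the equality case of Lemma \ref{teo}: a regular set $\mathcal{D}=(\sigma,\omega,\psi,\chi)$ saturates \eqref{ine} if and only if it minimizes the functional $\mathcal{M}$ of \eqref{defM} among all regular sets with the prescribed values of $J$, $Q_{\rm E}$, $Q_{\rm M}$, attaining the minimal value $\mathcal{M}=8+4\ln(4J^2+Q^4)$. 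So the statement reduces to identifying and proving uniqueness of the minimizer.

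First I would characterize the minimizers as solutions of the Euler--Lagrange system of $\mathcal{M}$. As developed in section \ref{avenidas}, after trading $(\sigma,\omega,\psi,\chi)$ for suitable target coordinates, the non-linear part of $\mathcal{M}$ is (up to the linear term $\int 4\sigma\,dS_0$ and boundary contributions fixed by the regular-set data) the harmonic-map energy of a map into the complex hyperbolic space $\chp$; its critical points are harmonic maps into $\chp$, equivalently geodesics after the axisymmetric reduction of section \ref{prueba2}. I would then verify directly that the explicit set $\mathcal{D}_0=(\sigma_0,\omega_0,\psi_0,\chi_0)$ of \eqref{EKN1}--\eqref{EKN4} solves this system and satisfies items i) and ii) of Definition \ref{regularset}. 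At the poles $\sin\theta=0$ one has $\Sigma_0=Q^2+2a_0^2$, hence $e^{\sigma_0}\big|_{\theta=0,\pi}=2a_0^2+Q^2$, so that $A=4\pi(2a_0^2+Q^2)$ yields $A^2=16\pi^2(4J^2+Q^4)$ by the algebraic identity $a_0^2(a_0^2+Q^2)=J^2$ (a consequence of $m_0^4-Q^2m_0^2=J^2$). A substitution of $\mathcal{D}_0$ into \eqref{defM} then shows $\mathcal{M}[\mathcal{D}_0]=8+4\ln(4J^2+Q^4)$, so $\mathcal{D}_0$ attains the bound and is a genuine minimizer.

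The decisive step is uniqueness. Here I would exploit the non-positive sectional curvature of the target $\chp$: for two competing maps into $\chp$ joined by the geodesic homotopy, the harmonic-map energy is a convex function of the homotopy parameter, strictly convex unless the maps differ by an ambient isometry fixing the prescribed boundary data. Since the values at the poles are pinned by condition i) of Definition \ref{regularset}, namely $\omega=\pm 4|J|$, $\psi=\pm Q_{\rm E}$, $\chi=\pm Q_{\rm M}$, any two minimizers must coincide, forcing $\mathcal{D}=\mathcal{D}_0$; this is where the harmonic-map argument of section \ref{prueba1}, which does not require axisymmetry, is most naturally applied.

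I expect this convexity argument to be the main obstacle, for two reasons. The energy density of $\mathcal{M}$ degenerates at the poles, where $\eta=e^\sigma\sin^2\theta\to 0$, so the geodesic-convexity estimate must be carried out against the weighted base metric, with the boundary and integrability terms controlled by conditions i)--ii) near $\theta=0,\pi$; and the target $\chp$ is non-compact, so one must preclude minimizing configurations that escape to infinity. Showing that the fixed pole data together with the finiteness of $\mathcal{M}$ confine the problem to a region where strict convexity applies is the crux, after which uniqueness of the minimizer, and hence the rigidity claim of Lemma \ref{unic}, follows.
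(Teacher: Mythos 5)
Your outline follows the paper's first route (section \ref{prueba1}): reduce saturation of \eqref{ine} to the equality case of the minimization of $\mathcal M$, check that $\mathcal D_0$ is a critical point attaining $\mathcal M=8+4\ln(4J^2+Q^4)$ (your algebra $2a_0^2+Q^2=\sqrt{4J^2+Q^4}$ is correct), and then appeal to uniqueness of harmonic maps into the nonpositively curved target $\chp$. The paper implements that last step not by convexity of the energy along geodesic homotopies but by the Schoen--Yau subharmonicity of the squared distance between two harmonic maps followed by the maximum principle on $S^2$; these are cousins and either could in principle work. The genuine gap sits exactly at the point you flag as ``the crux'' and then leave open: the poles. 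Since $\eta=e^{\sigma}\sin^2\theta\to0$ there, \emph{both} candidate maps escape to the ideal boundary of $\chp$, so agreement of $(\omega,\psi,\chi)$ at the poles --- which is all that condition i) of Definition \ref{regularset} supplies --- does not make the hyperbolic distance between the two maps tend to zero. Inspecting \eqref{delta}, the distance near the poles contains the term $\cosh(\sigma_0-\sigma_1)-1$, which vanishes only if the two solutions have the \emph{same} pole value of $\sigma$, and the term $(\omega_0-\omega_1+2\chi_0\psi_1-2\chi_1\psi_0)^2/(2\eta_0\eta_1)$, whose vanishing needs the $O(\sin^3\theta)$ decay of condition ii) (or of the Euler--Lagrange equations), not merely condition i).

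The paper closes the first of these by noting that a second saturating set must have the same area, $A_1=A_0$, hence $\sigma_1|_{\theta=0,\pi}=\sigma_0|_{\theta=0,\pi}=\ln(A_0/4\pi)$; this pole value of $\sigma$ is an \emph{output} of saturation (equality in \eqref{ineqinicial} combined with the minimal value of $\mathcal M$), not part of Definition \ref{regularset}. Your proposal asserts that ``the values at the poles are pinned by condition i)'' and lists only $\omega,\psi,\chi$, omitting precisely the datum without which the distance (or convexity) argument cannot be started. Supplying that observation, together with the pole decay estimates above, is what turns your sketch into the paper's proof. (The paper also gives an independent, fully explicit proof in section \ref{prueba2} by integrating the geodesic equations in $\chp$, where regularity at the poles forces $\alpha=2$ and the constants $c_1,c_5$, and hence the pole value of $\sigma$, are determined by $J$ and $Q$ alone.)
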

In Section \ref{seckerr} we will discuss the properties of the minimizer set $\mathcal D_0$ and show that this special datum appears in two non-equivalent important and concrete contexts, 
\begin{enumerate}
\item[\small\bf O1.] on a MOTS in the horizon of the extreme Kerr-Newman solution, and, 
\item[\small \bf O2.]  on a  minimal sphere in the {\it extreme Kerr-Newman throat}, which is a maximal initial datum.
 \end{enumerate}

\subsection{On the general validity of the AJQ inequality }\label{application}

The case II in Theorem \ref{coromots} allows to show  that, in some situations, the $AJQ$-inequality \eqref{ine} is valid for any surface and not just for stable minimal axisymmetric surfaces over a maximal slice. The particular situations in consideration will be those of ``trumpet" and ``doubly asymptotically flat" axisymmetric initial data.

This result is interesting in the light of the conjecture that the foliation by maximal slices, whose leaves are all of the same type (trumpet or doubly AF), is believed to cover the whole domain of outer communication of the black holes. For this type of solutions one expects the inequality to hold over a large variety of surfaces in the whole domain of outer communication. It is worth stressing that one does not expect the equality in \eqref{ine} to be achieved at any surface in the trumpet or doubly AF maximal slice \cite{Reiris12}.

To define axisymmetric ``trumpet" initial data sets we follow \cite{WaxeneggerBeigMurchadha} and refer the reader to this article for more details. A ``trumpet" initial data set for the Einstein-Maxwell equations is a maximal and axisymmetric electrovacuum initial datum ($\Sigma; h, K; E,B$), with $\Sigma\approx \mathbb{R}^{3}\setminus \{0\}$ and $\Sigma/_{U(1)}\approx [0,1]\times \mathbb{R}$ (in particular with an axis having two connected components) and with particular asymptotics at the origin and at infinity. Precisely, we require asymptotic flatness at infinity (of $\mathbb{R}^{3}$), and at the origin (of $\mathbb{R}^{3}$) requiring $h$ to approach a cylindrical metric in the following sense: there exists a diffeomorphism $\Phi$ between, say, $B_{1/2}\setminus\{0\}$ and $(T,\infty)\times S^2$ so that $(\Phi_*\bar h)_{ij}-\bar h_{ij}=o(1)$ as $t\to\infty$, where $\bar h$ denotes the cylindrical metric of the form $\bar h=f^{2}dt^2+q$, with $q$ a Riemannian metric on $S^2$. We refer to the origin as a cylindrical end. ``Doubly asymptotically flat" initial data are defined in exactly the same way but now with two asymptotically flat ends, at infinity and at the origin of $\mathbb{R}^{3}$. 
We prove below the following proposition.

\begin{Proposition} \label{trumpet} Consider either an axisymmetric ``doubly asymptotically flat" or a ``trumpet" maximal initial datum $(\Sigma;(g,K);(E,B))$ for the electrovacuum system, with total angular momentum and charges ${\mathcal J}$, ${\mathcal Q}_{\rm E}$ and ${\mathcal Q}_{\rm M}$. Then for any oriented, non-necessarily axisymmetric embedded surface $S$ of arbitrary topology, its angular momentum and charges are given by one of the following two possibilities 
\begin{align}
\label{VFC} J=0,\ Q_{\rm E}=0,\ Q_{\rm M}=0,\ \ or,\\
\label{VSC} J={\mathcal J},\ Q_{\rm E}={\mathcal Q}_{\rm E},\ Q_{\rm M}={\mathcal Q}_{\rm M}
\end{align}
Moreover the $AJQ$-inequality (\ref{ine}) holds.
\end{Proposition}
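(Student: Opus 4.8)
The plan is to reduce the statement to a purely topological dichotomy combined with the conservation laws of Section \ref{angular}, and then to feed the result into Theorem \ref{coromots} (case II). The structural fact I would exploit first is that the datum is \emph{electrovacuum}, so $T^M_{ab}=0$ and $j^a=0$; hence the right-hand sides of the charge conservation law and of \eqref{e:conservation} vanish. Consequently the electric and magnetic charges and the angular momentum defined by the Komar-type expression \eqref{j_Komar} (which the text already noted is well defined for a non-axisymmetric surface) depend only on the homology class of the oriented embedded surface $S$ in $\Sigma$: if two surfaces cobound a region, their charges and angular momenta coincide. This is what ``quantises'' the physical parameters along homology classes.

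Next I would analyse the topology. In both the ``trumpet'' and the ``doubly asymptotically flat'' case $\Sigma\approx\mathbb{R}^{3}\setminus\{0\}$ has exactly two ends $e_0$ (the origin) and $e_\infty$ (infinity), it is simply connected, and $H_2(\Sigma;\mathbb{Z})\cong\mathbb{Z}$ is generated by a cross-sectional sphere $\Sigma_0$ separating the ends, whose parameters are by definition (and by conservation from the AF end) $\mathcal Q_{\rm E},\mathcal Q_{\rm M},\mathcal J$. A connected embedded oriented surface $S$ is two-sided, and since $\Sigma$ is simply connected it is separating; being compact it lies between two cross-sections. A connected separating hypersurface splits $\Sigma$ into exactly two regions, so an arc joining $e_0$ to $e_\infty$ crosses $S$ with net multiplicity $0$ (both ends in one region) or $\pm1$ (ends separated). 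In the first case $S$ bounds a compact region $V$ containing no end, and the conservation law applied to $V$ gives $Q_{\rm E}(S)=Q_{\rm M}(S)=J(S)=0$, which is \eqref{VFC}; in the second $S$ is homologous to $\pm\Sigma_0$, so $S$ and $\Sigma_0$ cobound a matter-free region and their parameters agree, giving \eqref{VSC} (the sign is irrelevant since only $|J|$ and $Q^2$ enter).

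For the inequality, case \eqref{VFC} is immediate because the right-hand side of \eqref{ine} vanishes. In case \eqref{VSC} that right-hand side depends only on $\mathcal J,\mathcal Q_{\rm E},\mathcal Q_{\rm M}$, so it suffices to bound $A(S)$ below by the area of an area-minimizer $S_0$ in the nontrivial homology class. I would invoke the existence of such an $S_0$: being area-minimizing it is a stable minimal surface, and by the $U(1)$-invariance of the datum it can be taken axisymmetric; the stability inequality \eqref{e:inequality_alpha_AJQ} with $\alpha=1$ together with Gauss--Bonnet forces $S_0$ to be a topological sphere. Since $S_0$ lies in the nontrivial class it carries $J(S_0)=\mathcal J$, $Q_{\rm E}(S_0)=\mathcal Q_{\rm E}$, $Q_{\rm M}(S_0)=\mathcal Q_{\rm M}$, so Theorem \ref{coromots} (case II) yields $A(S_0)^2\geq 16\pi^2[4\mathcal J^2+(\mathcal Q_{\rm E}^2+\mathcal Q_{\rm M}^2)^2]$. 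As any surface in the class has area at least $A(S_0)$, we get $A(S)^2\geq A(S_0)^2\geq 16\pi^2[4J^2+(Q_{\rm E}^2+Q_{\rm M}^2)^2]$, which is \eqref{ine}; the fact that a generic $S\neq S_0$ has strictly larger area matches the remark that equality is not expected at an arbitrary surface of the slice.

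The hard part is the minimal-surface existence step, and it is where the two asymptotic types differ. For the doubly AF datum, standard barrier arguments (large coordinate spheres near either end bound a minimizing sequence away from infinity) produce an area-minimizing sphere in the separating class, with interior regularity guaranteed in dimension three. For the trumpet the cylindrical end is delicate: the cross-sections of the asymptotic cylinder $\bar h=f^{2}dt^2+q$ have area tending to that of $(S^2,q)$, so one must prevent the minimizing sequence from escaping down the cylinder and must control the regularity, convergence and axisymmetry of the limit there; this is the technical crux, to be handled with the asymptotic estimates of \cite{WaxeneggerBeigMurchadha} and the analysis of \cite{Reiris12}. A secondary care point is the reading of ``arbitrary topology'': the clean dichotomy above is for a \emph{connected} $S$ (for a disconnected $S$ the net multiplicity can exceed one, e.g. two nested spheres), so the homological argument should be understood componentwise, but since only $|J|$ and $Q^2$ enter the conclusion is unaffected for the connected surfaces relevant here.
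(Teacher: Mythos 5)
Your proposal is correct and its architecture coincides with the paper's: a Gauss-theorem/homology dichotomy for $(J,Q_{\rm E},Q_{\rm M})$, followed by area minimization to produce a stable axisymmetric minimal sphere carrying the full charges, and then an appeal to case II of Theorem \ref{coromots}. The differences are in how the minimization step is scaffolded. The paper minimizes over the \emph{isotopy} class of $S$ via Meeks--Simon--Yau, obtaining a finite collection $S_1,\dots,S_m$ of embedded stable minimizers, and then identifies their topology not through Gauss--Bonnet but through the classification of closed axisymmetric surfaces (sphere or torus) together with the observation that an axisymmetric torus is contractible in $\mathbb{R}^3\setminus\{0\}$, hence excluded; you instead minimize in the \emph{homology} class and get the sphere topology from the stability inequality \eqref{e:inequality_alpha_AJQ} with $\alpha=1$ plus Gauss--Bonnet, which is equally valid once axisymmetry of the minimizer is in hand (both you and the paper invoke, essentially as a black box, that area minimizers in a complete non-compact axisymmetric $3$-manifold are axisymmetric). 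Your homological framing is slightly coarser but buys a cleaner direct (non-contradiction) chain $A(S)\geq A(S_0)\geq 4\pi\sqrt{4\mathcal J^2+\mathcal Q^4}$; you should, however, address possible disconnectedness of the homological minimizer the same way you do for $S$ (at least one component carries the nontrivial class and hence the full charges). The one place where your plan is strictly harder than it needs to be is the cylindrical end of the trumpet: you propose to \emph{prevent} the minimizing sequence from escaping down the cylinder, whereas the paper's footnote observes that escape is harmless, because the asymptotic cross-sectional spheres of the cylinder already satisfy \eqref{ine}, so a degenerating sequence still delivers the required lower bound on the infimum of area; adopting that observation removes your ``technical crux.''
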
  
\begin{proof} 
\vs
To better visualize the proof, let us assume that we choose a diffeomorphism between $\mathbb{R}^{3}\setminus \{0\}$ and $\Sigma$ in such a way that the orbits of the Killing field, as seen in $\mathbb{R}^{3}\setminus \{0\}$, are exactly those circles which are the rotations of points around the $z$-axis. In this way the two components of the axis  are given by $\{(x,y,z),x=y=0,z>0\}$ and $\{(x,y,z),x=y=0,z<0\}$.

Let $S$ be an oriented surface. As a surface in $\mathbb{R}^{3}\setminus \{0\}\subset \mathbb{R}^{3}$ it divides $\mathbb{R}^{3}$ into two connected components. If the unbounded component contains the origin $\{0\}$ then $S$ encloses (including $S$) a compact region in $\mathbb{R}^{3}\setminus \{0\}$ and therefore (by Gauss theorem) $J$, $Q_{\rm E}$ and $Q_{\rm M}$ are zero, namely their values are as in (\ref{VFC}). In this case (\ref{ine}) is trivial. We assume therefore that it is the bounded component that contains the origin. In this case the values of $J,Q_{\rm E}$ and $Q_{\rm M}$ are (by Gauss theorem again) those of the end, namely as in (\ref{VSC}). 

Now,  in order to prove that the AJQ inequality \eqref{ine} is satisfied, assume by contradiction that (\ref{ine}) does not hold. Following \cite{MeeksSimonYau}, there are surfaces\footnote{The conclusion is direct for ``doubly asymptotically flat" initial data. For ``trumpet" data it requires a little more effort but feasible by taking into account that the ``asymptotic spheres" over the cylindrical end satisfy (\ref{ine}).} (possibly repeated) $S_{1},\ldots,S_{m}$ realizing the infimum of the the areas $A(\tilde{S})$ where $\tilde{S}$ is isotopic to $S$, namely $\sum A(S_{i})=\inf\{A(\tilde{S}),\tilde{S}\sim S\}$, where $\tilde{S}\sim S$ means that $\tilde{S}$ is isotopic to $S$ (one can see that the infimum is non-zero). Moreover, the surfaces are non-contractible (to a point) in $\mathbb{R}^{3}\setminus \{0\}$ and are also embedded. It follows that they are orientable (otherwise are contractible) and stable. As the manifold $(\Sigma,g)$ is axisymmetric (complete) and non-compact then every $S_{i}$, $i=1,\ldots,m$ is known to be axisymmetric. So each of them is either an axisymmetric sphere or an axisymmetric torus (there are no axisymmetric surfaces of higher genus). But, any axisymmetric torus is contractible to a point in $\mathbb{R}^{3}\setminus \{0\}$, which is not possible. Therefore all the $S_{i}'s$ are axisymmetric spheres and as they are non-contractible (to a point) in $\mathbb{R}^{3}\setminus \{0\}$ then they all must enclose the origin. Thus, the angular momentum and charges of, say, $S_{1}$, are the given ${\mathcal J},{\mathcal Q}_{\rm E}$ and ${\mathcal Q}_{\rm M}$. Therefore we have
\begin{equation}
A^{2}(S)\geq A^{2}(S_{1})\geq 16\pi^2\big[4{\mathcal J}^2+({\mathcal Q}_{\rm E}^2+ {\mathcal Q}_{\rm M}^2)^2\big]=16\pi^2\big[4J^2+(Q_{\rm E}^2+ Q_{\rm M}^2)^2\big]           
\end{equation}  
as desired.
\end{proof}

\subsection{A discussion on the extreme Kerr-Newman sphere}\label{seckerr}

We have seen above, that the set $\mathcal D_0$ given by equations \eqref{EKN1}-\eqref{EKN4} plays a crucial role in bounding the area of an axisymmetric MOTS or minimal surface (over a maximal slice), and moreover, due to Proposition \ref{trumpet}, in bounding the area of any surface in axially symmetric 
electrovacuum initial data. We show here how this set is related to extreme Kerr-Newman solution, from where it takes the name \textit{extreme Kerr-Newman sphere set}.

It is well known that the Kerr-Newman solution is parametrized by four quantities: the mass $m$, the angular momentum $J$ and the electromagnetic charges $Q_{\rm E}$, $Q_{\rm M}$. Of these parameters, let $J, Q_{\rm E}, Q_{\rm M}$ be fixed and decrease the remaining parameter $m$ as
$m\downarrow m_{0}$.
If we denote by ${\mathcal{D}}_{m}$ the set on the bifurcating sphere (for each $m$) then the limit $\lim_{m\downarrow m_{0}} {\mathcal{D}}_{m}={\mathcal{D}}_{0}$ is obtained. In other words we take the limit of ${\mathcal{D}}_{m}$ as the black holes become extremal to obtain ${\mathcal{D}}_{0}$ in (\ref{EKN1})-(\ref{EKN4}). As we discuss below, this way of finding ${\mathcal{D}}_{0}$  allows one to see how this particular kind of datum arises in the contexts {\bf \small O1} and {\small\bf O2} mentioned in section \ref{main}. 

The spacetime metrics for the Kerr-Newman solutions, in the usual Boyer-Lindquist coordinates, are given by (see \cite{Carter72})
\begin{align}\label{4metricKN}
g_{ab}dx^a dx^b=&-\frac{\Delta-a^2\sin^2\theta}{\Sigma}dt^2-\frac{2a\sin^2\theta}{\Sigma}(r^2+a^2-\Delta)dtd\phi \nn\\
&+\frac{(r^2+a^2)^2-\Delta a^2\sin^2\theta}{\Sigma}\sin^2\theta d\phi^2
+\frac{\Sigma}{\Delta}dr^2+\Sigma d\theta^2, 
\end{align}
where
\begin{equation}
\Sigma:=r^2+a^2\cos^2\theta,\qquad \Delta:=r^2+a^2+Q^2-2mr.
\end{equation}
The parameter $a=J/m$ is the angular momentum per unit mass and again $Q^2=Q_{\mathrm E}^2+Q_{\mathrm M}^2$. The electromagnetic part of the solution is encoded in the potential $A_a$ which is given explicitly by (see \cite{Carter72})
\begin{equation}\label{A}
A_a=-\frac{Q_{\mathrm E}r}{\Sigma}[(dt)_a-a\sin^2\theta(d\phi)_a]+\frac{Q_{\mathrm M}\cos\theta}{\Sigma}[a(dt)_a-(r^2+a^2)(d\phi)_a]
\end{equation}
The {\it subextremal} Kerr-Newman black holes are those solutions with $m^{2} > \frac{Q^{2}+\sqrt{4J^{2}+Q^{4}}}{2}$.
The {\it extreme Kerr-Newman black holes} are those solutions with $m^{2} = \frac{Q^{2}+\sqrt{4J^{2}+Q^{4}}}{2}$.
Let us concentrate on non-extreme Kerr-Newman black holes. Let $r_{H}$ be the greatest root of $\Delta=0$ (corresponding
to the event horizon), explicitly $r_{H}=m+\sqrt{m^2-a^2-Q^2}$. The range of coordinates $\{r\geq r_{H}\}$ ($t\in \mathbb{R},\theta\in [0,\pi ),\varphi\in [0,2\pi)$ arbitrary) covers exactly the whole domain of outer-communication and its boundary $\{r=r_{H}\}$ consists of a bifurcating sphere, a black hole horizon and a white hole horizon (respectively BHH and WHH, see Figure \ref{EKerr}).
The bifurcation surface is located at $\{r=r_{H}\}$ over the maximal slice $\{t=0\}$. It has a double character: it is at the same time a strictly stable minimal surface over the (doubling of the) maximal slice $\{t=0\}$ and a strictly stable MOTS on the space-time.  The area of the bifurcating sphere is easily calculated (use that $\Delta(r_{H})=0$) as
\be
A(S_{H})= 4\pi (r_{H}^{2}+a^{2})>4\pi \sqrt{4J^{2}+Q^{4}}  ,
\ee
and we have
\be\label{2}
A(S_{H})\downarrow 4\pi \sqrt{4J^{2}+Q^{4}}  ,
\ee
as $m\downarrow m_{0}$. 
\begin{figure}[h]
\centering\includegraphics[width=10cm,height=7.5cm]{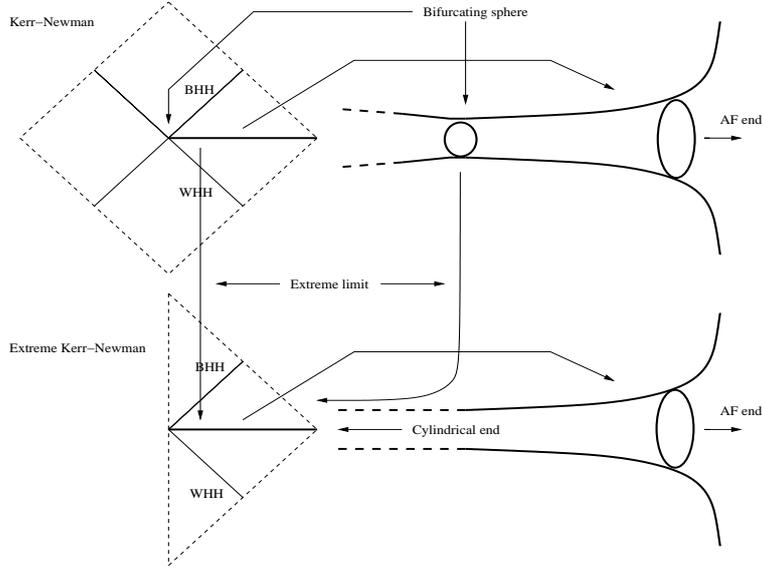}
\caption{Penrose diagram of the Kerr-Newman solution and its $\{t=0\}$ slice. Below, the Penrose diagram of the extreme Kerr-Newman solution and its $\{t=0\}$ slice displaying the cylindrical end and the asymptotically flat (AF) end.}
\label{EKerr}\end{figure}

We make now some claims, crucial to link the MOTS and minimal surface perspectives and show why the limit ${\mathcal{D}}_{m}\rightarrow {\mathcal{D}}_{0}$ allows us to see the set ${\mathcal{D}}_{0}$ as in {\bf\small O1} and {\bf\small O2}.

\begin{enumerate}
\item[I.] The set ${\mathcal{D}}_{m}$ over $S_{H}$ is the same as the  set on any axisymmetric sphere $S$  embedded 
in the black (white) hole horizon.  This can be seen as follows. The past (future) space-time flow generated by the stationary Killing field pushes any surface $S$ over the black (white) hole towards $S_{H}$ and the convergence is smooth. As the flow by the Killing vector field is an isometry (also leaving $F_{ab}$ invariant) and the components $(\sigma,\omega,\psi,\chi)$ of the set on the surface $S$ are intrinsic to the surface, it follows by this and continuity that the set over $S_{H}$ or over any axisymmetric sphere $S$ must be the same.

\item[II.] The black (white) hole horizon of the extreme Kerr-Newman solution is the limit of the black (white) hole horizon of the Kerr-Newman black hole solutions as $m\downarrow m_{0}$. To see this, just take the point-wise limit of expression (\ref{4metricKN}). In this limit the horizons $\{r=r_{H}\}$ approach the extreme horizon $\{r=m_{0}=\sqrt{Q^{2}+a_{0}^{2}}\}$. 

\item[III.] For every $m$, consider the initial data over $\Sigma=\{t=0\}$, $(\Sigma, h_{m},K_{m};E_{m},B_{m})$ where we put a subindex $m$ to emphasize that the initial datum is parametrized by $m$. ``Following" the initial data around $S_{H}$ as $m\downarrow m_{0}$, a smooth limit initial datum is obtained (see below for details on how to perform the limit). It is the so called {\it extreme Kerr-Newman throat}, which is a maximal electrovacuum initial datum on $\mathbb{R}\times S^{2}$ with the explicit form
\begin{align}
\label{K1} h_{T}&=\Sigma_{0} d\tilde{r}^{2}+\Sigma_{0} d\theta^{2}+\frac{(4J^{2}+Q^{4})}{\Sigma_{0}}\sin^{2}\theta d\varphi^{2},\\
\label{K2} K_{T}&= -J((J/a_{0})^{2}+a_{0}^{2})\frac{\sin^{2}\theta}{\Sigma_{0}^{\frac{3}{2}}}(d\tilde{r}dt+dtd\tilde{r}),\\
\label{K3} E_{T}&=-\bigg[ Q_{\rm E}(Q^{2}+a_{0}^2\sin^2\theta)-Q_{\mathrm M}2J\cos\theta\bigg]  
\frac{ d{\tilde{r}}}{\Sigma_{0}^{\frac{3}{2}}},\\
\label{K4} B_{T}&=-\bigg[Q_{\mathrm M}(Q^2+a_{0}^2\sin^2\theta)+Q_{\mathrm E}2J\cos\theta \bigg] 
\frac{ d{\tilde{r}}}{\Sigma_{0}^{\frac{3}{2}}}
\end{align}
The solution is independent 
of $\tilde{r}$ (the coordinate in the $\mathbb{R}$ factor; see below), which implies that $\partial_{\tilde{r}}$ is a Killing field. For this reason the initial datum has the same form if we replace $\tilde{r}$ by $\tilde{r}+c$ where $c$ is a constant. In particular the coordinate can be chosen in such a way that the bifurcating sphere $S_{H}$ (for given $m$) converges (as $m\downarrow m_{0}$) to the minimal sphere $S_{0}=\{\tilde{r}=0\}$, that we define as an {\it extreme Kerr-Newman throat sphere} and which because of (\ref{2}) satisfies (\ref{ine}). Of course any other sphere with constant $\tilde{r}$ has the same set of potentials ${\mathcal{D}}_{0}$.

We emphasize that the calculations leading to the extreme Kerr-Newman throat initial datum (\ref{K1})-(\ref{K4}) are long but straightforward if one follows a simple procedure. From (\ref{4metricKN})-(\ref{A}) obtain the explicit expressions of $h$, $K$, $E$ and $B$, over $\{t=0\}$ in the coordinates $\{r,\theta,\varphi\}$ ($r\geq r_{H}$). Then make the change of the radial coordinate $r$ to $\tilde{r}$ as
\be
\tilde{r}(r)=\int_{r_{H}}^{r}\frac{1}{\sqrt{\Delta(\bar{r})}} d\bar{r}
\ee
Of course $\tilde{r}(r_{H})=0$. Express $h,K,E,B$ whose components where given in terms of $\{(r,\theta,\varphi)\}$, in the coordinates $\{(\tilde{r},\theta,\varphi)\}$. Note that now the range of the coordinates $\{(\tilde{r},\theta,\varphi)\}$ is $[0,\infty)\times [0,\pi)\times [0,2\pi)$. Then {\it in this domain} take the point-wise limit $m\downarrow m_{0}$ of everyone of the components of the fields (in the $\{(\tilde{r},\theta,\varphi)\}$ coordinates). The result is (\ref{K1})-(\ref{K4}).  

\end{enumerate}

\noindent Summarizing, from I, II and III one obtains that the set ${\mathcal{D}}_{0}=\lim_{m\downarrow m_{0}} {\mathcal{D}}_{m}$ verifying (\ref{ine}), can be achieved  as the set on a MOTS inside a space-time (more precisely on an axisymmetric surface over the horizon of the extreme Kerr-Newman solution), 
or as the set endowed on stable minimal surfaces over maximal slices (more precisely over the extreme Kerr throat initial datum).

To see that ${\mathcal{D}}_{0}$ is given by (\ref{EKN1})-(\ref{EKN4}) proceed as follows. Making $\tilde{r}=0$ in (\ref{K1}) one obtains the two-metric of the extreme Kerr-Newman sphere to be 
\begin{equation}
ds^2=\Sigma_{0}d\theta^{2}+\frac{(4J^{2}+Q^{4})}{\Sigma_{0}}\sin^{2}\theta d\varphi^{2}
\end{equation}
From the definition of $\sigma$ in (\ref{SDE}) one obtains (\ref{EKN1}). To obtain (\ref{EKN3}) and (\ref{EKN4}) use (\ref{K3}) and (\ref{K4}) and the definitions (\ref{defpsichi}). We discuss now how to obtain (\ref{EKN2}). Over any two-sphere $\{r=r_{1}>r_{H},t=0\}$ on a Kerr-Newman black-hole, one uses the null vectors 
\bea\label{ell}
\ell^a&=&\left(r^2+a^2\right)(\partial_t)^a+a(\partial_\phi)^a+\Delta(\partial_r)^a \\
\label{k}
k^a&=&\left(\frac{r^2+a^2}{2\Delta\Sigma}\right)(\partial_t)^a+\frac{a}{2\Delta\Sigma}(\partial_\phi)^a-\frac{1}{2\Sigma}(\partial_r)^a \ ,
\eea
normalized such that $\ell^ak_a=-1$ to calculate $\Omega^{(\ell)}_a$. Taking the limit $r_{1}\rightarrow r_{H}$ and then the limit $m\downarrow m_{0}$ one obtains a limit form over the extreme Kerr-throat, which can be calculated to be
\bea
\label{e:Omega_extreme}
\Omega^{(\ell)} = -\frac{1}{(2\Sigma_0)^2}\left(2 a_{0}^2\Sigma_0 \sin(2\theta)d\theta + 4a_{0}\sqrt{a_{0}^2+Q^2}(2a_{0}^2+Q^2) 
\sin^2(\theta) d\phi\right)
\eea
 From (\ref{e:Omega_ell}) and axial symmetry\footnote{ More generally, one can fix 
$\tilde{\omega}$ and $\lambda$ by solving the second-order system: 
$D^aD_a \tilde\omega = - f$, $ D^aD_a \lambda = D^a \Omega^{(\ell)}_a$, 
where $(d\Omega^{(\ell)})_{ab} = f \epsilon_{ab}$.}, by
solving $\partial_{\theta}\tilde{\omega}=\Omega^{(\ell)}_{\phi}$ and 
$\partial_{\theta} \lambda=\Omega^{(\ell)}_{\theta}$, and taking into account 
$\omega_{K0}'=2\eta \tilde\omega^{'}_{0}$, we get
\bea
\label{e:omega_lambda}
 \omega_{K0}(\theta)&=&
\frac{(2a_0^2+Q^2)^2Q^2}{a_0^2(a_0^2+Q^2)}\arctan\left(\frac{a_0\cos\theta}{\sqrt{a_0^2+Q^2}}\right)-
\cos\theta\frac{(2a_0^2 + Q^2)^3}{\sqrt{a_0^2+Q^2}a_0\Sigma_0}  \nn \\
\lambda(\theta)&=&\ln[\sqrt{2\Sigma_0}] \ .
\eea
Moreover, we verify $\omega_{K0}(\theta=0)=-\omega_{K0}(\theta=\pi)=-4J_K$, where $J_K$ is the Komar contribution to the total angular momentum 

\begin{eqnarray}
J_K=\frac{(2a_0^2 + Q^2)^2}{4a_0^2(a_0^2 + Q^2)}\left[a_0\sqrt{a_0^2+Q^2}-Q^2\arctan\left(\frac{a_0}{\sqrt{a_0^2+Q^2}}\right)\right]
 \ .
\end{eqnarray}
Using expression (\ref{e:omega_lambda}), together with (\ref{EKN3}) and (\ref{EKN4})
into (\ref{defomega}), we get $\omega_0$ in (\ref{EKN2}), and thus complete the derivation of the set $\mathcal D_0$.

\vs

Finally, we present two remarks concerning the extreme Kerr- Newman sphere:
\begin{itemize}
\item \textit{$\mathcal D_0$ in $\mathbb H^2_{\mathbb C}$}. There is an interesting description of the geometry of the extreme Kerr-Newman sphere which shows the underlying connection with the complex hyperbolic space. This connection will be exposed in section \ref{prueba2} and arises when one studies the critical point of the functional $\mathcal M$. What we want to show here is that the set $\mathcal D_0$ can be visualized as two arcs of circles in $\mathbb H^2_{\mathbb C}$.  In order to describe these arcs, we consider, instead of the quadruple $(\sigma,\omega,\psi,\chi)$, the two pairs $(\zeta,\omega)$ and $(\psi,\chi)$ where
\begin{equation}
\zeta=-(\eta+\psi^{2}+\chi^{2}).
\end{equation} 
Then, whether by working with the Euler-Lagrange equations of $\mathcal M$  (as is done in section \ref{prueba2}, precisely, the form of the r.h.s of (\ref{LTA}) comes from the last two eqs. of (\ref{EQchi}), while the form of l.h.s comes from the first eqn. of (\ref{EQchi}) and (\ref{x4})) or with the explicit expression for the potentials, equations \eqref{EKN1}-\eqref{EKN4}, we find the following remarkable relations:
\begin{equation}\label{LTA}
\zeta+i\omega=R_1e^ {if}+ B_1, \qquad \chi+i\psi=R_2e^ {if}+ B_2,
\end{equation}
where the angle to the center $f$ is given by (see \eqref{fff})
\be\label{anglef}
f=2\arctan \left(\frac{\sqrt{Q^4+4J^2}-Q^2}{2J} \cos\theta\right)
\ee
and $R_1=-2\sqrt{4J^{2}+Q^{4}}$, $B_1=Q^{2}$, $R_2=-\frac{Q}{2J}\sqrt{4J^{2}+Q^{4}}$, $B_2=Q^3/2J$.

This shows that the first arc, in the $(\zeta,\omega)$ plane, starts at $\left(-Q^{2},-4|J|\right)$  and ends at $\left(-Q^{2},4|J|\right)$ (this can be obtained by evaluating the pair at the values $0, \pi$ respectively). The center of the circle to 
which the arc belongs lies on the $\zeta=0$ axis and its radius is $R_1$. The arc in the $(\psi,\chi)$ plane starts at $(-Q,0)$ and ends at $(Q,0)$. The center of the circle to which the arc belongs lies on the $\psi=0$ axis and its radius is $R_2$. 

\begin{figure}[h]
\centering\includegraphics[width=10cm,height=5cm]{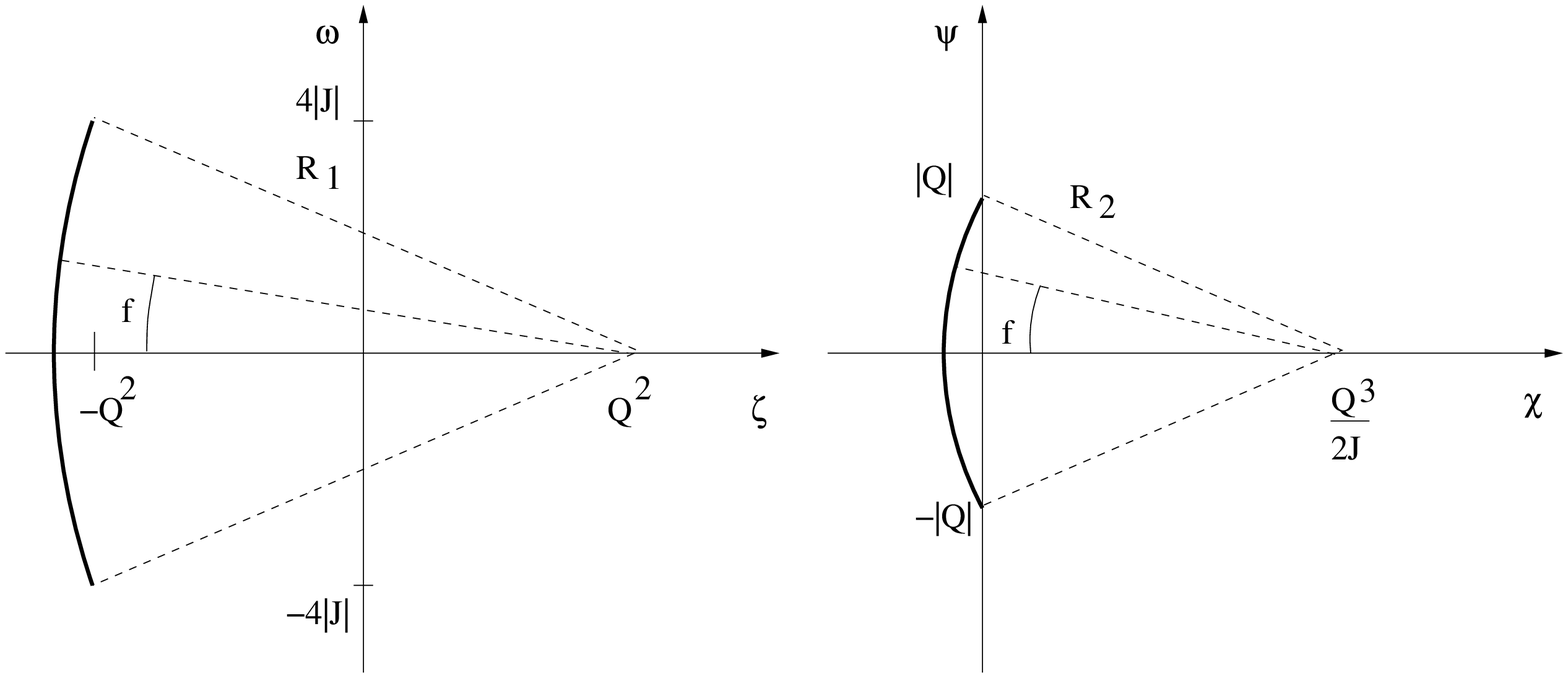}
\caption{$\mathcal D_0$ as arcs of circles.}
\label{Fig1}\end{figure}

\item \textit{On the choice of the probe function $\alpha$}. Here we want to give some insights about the choice of the function $\alpha$, equation (\ref{e:alpha}), entering in the stability condition \eqref{e:inequality_alpha_AJQ}. In particular, it has the nature of a rescaling factor between null normals and
we show that it is related to the minimizing set $\mathcal D_0$. From the transformation properties of $\Omega_a^{(\ell)}$ under  a rescaling of the null normals $\ell^a$, $k^a$, we note that the null vector $\ell_o^a=e^{-\lambda} \cdot \ell^a$, with $\lambda$ given by the expression in (\ref{e:omega_lambda}), is such that the associated fundamental form $\Omega_a^{(\ell_o)}$ is divergence-free, i.e. $D^a \Omega_a^{(\ell_o)}=0$. This provides a natural quasilocal normalization for the outgoing null vector on $S$. On the other hand, evaluating $\alpha$ in (\ref{e:alpha}) with the expressions in (\ref{EKN1}) we can check $\alpha \ell_o^a= \mathrm{const}\cdot\ell^a_{\mathrm{Killing}}$, where $\ell^a_{\mathrm{Killing}}=(\partial_t)^a-\Omega \partial_\phi$ is the only null vector on $S$ (up to constant) that extends as a Killing vector in a spacetime neighborhood of $S$ (here $\Omega$ is the constant horizon angular velocity). In other words, our choice of $\alpha$ in  (\ref{e:alpha}) provides precisely the rescaling from the canonical quasilocal choice $\ell^a_o$ on $S$, with  divergence-free fundamental form, to the globally defined Killing vector of the Kerr-Newman spacetime that becomes null on the horizon. This remark is explained by the rigidity results in \cite{Jaramillo:2012zi}
(see also the analysis in \cite{Mars:2012sb}).
\end{itemize}

\section{Different avenues to prove the AJQ inequality} \label{avenidas}

The AJQ inequality \eqref{ine} is obtained from two ingredients, namely, from the stability condition, leading to Lemma \ref{lema1}, and from the resolution of the naturally associated minimizing problem, leading to Lemma \ref{teo}. In this section, we show two different ways to approach the variational principle.

Before addressing these points, a remark on the implications of the analysis of the AJQ
inequality in the stationary case is in order. In Refs. \cite{HenAnsCed08,Hennig10,Ansorg:2010ru}  
the strict version of inequality (\ref{ine}), with vanishing magnetic charge,
is proved for Killing horizons in axisymmetric spacetimes. 
The scheme of that proof shares the two ingredients of the analysis in this section: first, 
use of a stability condition in the form of a horizon (outer) subextremal
assumption \cite{Hay94,Booth08} from which an {\em integral stability condition} for axisymmetric 
Killing horizons is derived; second, definition of a variational 
problem from such integral stability condition, whose resolution leads to the strict (\ref{ine}).
Remarkably, in Ref. \cite{GabachJaramillo11} it is explicitly shown that the first ingredient, namely
the integral stability condition, can be derived directly from the  quasilocal
(strict) stability of axisymmetric MOTS, in particular from the strict version of inequality 
(\ref{e:inequality_alpha}). Further geometrical insight on the relation between 
the stationary axisymmetric black hole condition and quasilocal MOTS stability is provided in 
\cite{Chrusciel11,Mars:2012sb}. As a consequence, the variational analysis in 
\cite{HenAnsCed08,Hennig10,Ansorg:2010ru} can be exactly applied to strictly stable axisymmetric
MOTS, so that the proof in \cite{HenAnsCed08,Hennig10,Ansorg:2010ru} of the strict inequality
(\ref{ine}) with $Q_{\rm M}=0$ extends straightforwardly from the stationary setting 
to the dynamical case with arbitrary standard matter \cite{GabachJaramillo11} (namely,
the strict version of item I in Th. \ref{coromots}). The extension of the variational problem
in \cite{HenAnsCed08,Hennig10,Ansorg:2010ru} to include the equality case and the 
rigidity analysis is under research \cite{Ceder12}.

Following a different rationale, the two approaches to the variational problem discussed in this section 
aim at enriching the understanding of the geometric structure underlying 
Theorem \ref{coromots}. We believe that each of the perspectives presented here, 
gives important insights about this problem.

The first approach, in section \ref{prueba1} deals with non-necessarily axisymmetric sets of potentials $\mathcal D$ and its associated functional $\mathcal M$. Although the main result, Theorem \ref{coromots} holds in the physical scenario of axisymmetric surfaces $S$, the fact that the variational problem can be stated and solved outside axisymmetry shows that extreme Kerr-Newman sphere plays a special role among a wider class of sets $\mathcal D$. Inspired by this generalization, one is tempted to think about the possibility of extending inequality \eqref{ine} to other  non-necessarily axisymmetric physical situations. This, however is not an easy task, mainly because it is not clear how to give a satisfactory  canonical definition of angular momentum outside axial symmetry.  Nevertheless, if such statement can be made, the functional $\mathcal M$ and its properties studied here might be of relevance.

The second approach, in section \ref{prueba2}, is restricted to axisymmetry and therefore, when solving the minimization problem for $\mathcal M$, the Euler-Lagrange equations reduce to a system of ordinary differential equations which can be solved  explicitly. Then, a remarkable point that comes up when studying these equations, is that the boundary conditions $J$, $Q_{\rm E}$ and $Q_{\rm M}$ for the minimizer of $\mathcal M$ determine uniquely the boundary conditions for the remaining potential $\sigma$. This is the key fact under the sharpness of inequality \eqref{ine}. Actually, an important consequence of this is that we can prove uniqueness for the minimizer of $\mathcal M$ with given values of $J, Q_{\rm E}, Q_{\rm M}$ without any reference to the boundary values of $\sigma$. This is a difference to what we do in the non-axisymmetric case, where the boundary values of $\sigma$ are prescribed from $A=4\pi e^{\sigma}|_{\theta=0}$.

\subsection{Proof from harmonic maps}\label{prueba1}

In this section we prove Lemmas \ref{teo} and \ref{unic} by exploiting the connection between $\mathcal M$ and a harmonic energy for maps from the sphere into the complex hyperbolic space. The first Lemma follows closely the arguments  given by Acena et al, \cite{Acena11}. To prove the rigidity in inequality \eqref{ine} we use certain properties of the distance between harmonic maps in the complex hyperbolic space.

\begin{proof} (\textit{of Lemma \ref{teo}})

To prove our claim, we follow the lines and  arguments of \cite{Acena11}, and refer to that article for more details. The key points in the argument are the following:
\begin{enumerate}
\item The extreme Kerr-Newman sphere, \textit{i.e.} the set $\mathcal D_0$, satisfies the Euler-Lagrange equations for the functional $\mathcal M$:
\begin{equation}\label{ech1}
\Delta\sigma-2=-\frac{(D\omega +2\chi D\psi-2\psi D\chi)^2}{\eta^2}-\frac{2}{\eta}\left((D\psi)^2+(D\chi^2)\right)^2,
\end{equation}
\begin{equation}\label{ecuomega}
D_a\left(\frac{D^a\omega+2\chi D^a\psi-2\psi D^a\chi}{\eta^2}\right)=0
\end{equation}
\begin{equation}
D_a\left(\frac{D^a\chi}{\eta}\right)-\frac{1}{\eta^2}D_a\psi(D^a\omega+2\chi D^a\psi-2\psi D^a\chi)=0 
\end{equation}
\begin{equation}\label{echf}
D_a\left(\frac{D^a\psi}{\eta}\right)+\frac{1}{\eta^2}D_a\chi(D^a\omega+2\chi D^a\psi-2\psi D^a\chi)=0,
\end{equation}
where indices are moved with the standard round metric on $S^2$.

\item The functional $\mathcal M$ is related to the harmonic energy $\tilde{\mathcal M}_\Omega$ for maps $(\eta,\omega,\chi,\psi)$ from a subset $\Omega\subset S^2\setminus\{\theta=0,\pi\}$ into the complex hyperbolic space $\mathbb {H} ^2_{\mathbb C}$ which is equipped with the metric 
\begin{equation}\label{hyperbolicmetric}
g_H=\frac{d\eta^2}{\eta^2}+\frac{(d\omega+2\chi d\psi-2\psi d\chi)^ 2}{\eta^2}+4\frac{d\chi^2+d\psi^2}{\eta},
\end{equation}
and is given by
\begin{equation}
\tilde{\mathcal M}_\Omega=\frac{1}{2\pi}\int_\Omega \frac{|D\eta|^2}{\eta^2}+\frac{|D\omega+2\chi D\psi-2\psi D\chi|^ 2}{\eta^2}+4\frac{|D\chi|^2+|D\psi|^2}{\eta} dS_0.
\end{equation}
Now restrict the integral in the definition of $\mathcal M$, \eqref{defM} to compact regions with smooth boundary $\Omega\subset S^2\setminus\{\theta=0,\pi\}$ and denote the resulting functional as $\mathcal M_\Omega$. We have
\begin{equation}\label{relMMt}
\tilde{\mathcal M}_\Omega=\mathcal M_\Omega+4\int_\Omega\ln\sin\theta dS+\oint_{\partial\Omega}(4\sigma+\ln\sin\theta)\frac{\partial\ln\sin\theta}{\partial n}dl
\end{equation} 
where $n$ is the exterior unit normal to the boundary $\partial\Omega$ of $\Omega$ and $dl$ is the measure element on $\partial\Omega$. Since the difference between $\tilde{\mathcal M}_\Omega$ and $\mathcal M_\Omega$ is a constant plus a boundary term, both functionals have the same Euler-Lagrange equations.

\item A result of Hildebrandt \textit{et al} \cite{Hildebrandt77} states that if the domain for the map is compact, connected, with non-void 
boundary and the target manifold has negative sectional curvature, then a minimizer of the harmonic energy with Dirichlet boundary conditions
 exists, is unique, smooth and satisfies the associated Euler-Lagrange equations. That is, harmonic maps are minimizers of the harmonic energy
 for given Dirichlet boundary conditions.
\end{enumerate}

With the above comments, the proof goes as follows: divide the sphere into three regions as indicated in equations \eqref{div1}.  Use a partition function to interpolate the potentials between extreme Kerr-Newman solution in region $\Omega_I$ and a general solution in region $\Omega_{III}$. This gives a Dirichlet problem in region $\Omega_{IV}=\Omega_{II}\cup\Omega_{III}$, which implies, by point 3. above, that the mass functional for extreme Kerr-Newman is less than or equal to the mass functional for the auxiliary interpolating map in the whole sphere. Finally, we take the limit as $\Omega_{III}$ covers the whole sphere and show that the mass functional for the auxiliary maps converges to the mass functional for the original general set.

After giving this general discussion about the proof, we begin with the splitting of the sphere according to
\begin{equation}\label{div1}
\Omega_I = \{\sin\theta\leq e^{-(\log\epsilon)^2}\},\quad \Omega_{II} = \{e^{-(\log\epsilon)^2}\leq\sin\theta\leq\epsilon\},\quad
\Omega_{III} = \{\epsilon\leq\sin\theta\},
\end{equation}
where $0<\epsilon<1$. We also define the region $\Omega_{IV} = \Omega_{II}\cup\Omega_{III}$.

Let $f:\mathbb{R}\rightarrow\mathbb{R}\in C^\infty(\mathbb{R})$, $0\leq f\leq 1$, be
the partition function defined as
\begin{equation}
f(t)=1\quad \mbox{for}\quad t\leq1,\qquad f(t)=0\quad \mbox{for}\quad 2\leq t,\qquad\left|\frac{d\,f}{dt}\right|\leq1
\end{equation}
and $f_\epsilon$ be
\begin{equation}
 f_\epsilon(\rho)=f(t_\epsilon(\rho)),\qquad  t_\epsilon(\rho)=\frac{\log(-\log\rho)}{\log(-\log\epsilon)},\qquad \rho\leq1.
\end{equation}
Therefore we have

\begin{equation}
f_\epsilon(\rho)=0\quad \mbox{for}\quad\rho\leq e^{-(\log\epsilon)^2},\qquad f_\epsilon(\rho)=1\quad\mbox{for}\quad\rho\geq\epsilon
\end{equation}
and
\begin{equation}\label{intChi}
 \lim_{\epsilon\rightarrow0}\int_0^\infty|\partial_\rho f_\epsilon|^2\rho d\rho=0.
\end{equation}

Now we define the interpolating functions. Let $u$ represent any of the variables $\sigma$, $\omega$, $\chi$, $\psi$, and let $u_0$ represent any of the variables $\sigma_0$, $\omega_0$, $\chi_0$, $\psi_0$, corresponding to the extreme Kerr-Newman sphere set with the same angular momentum and charges. We define $u_\epsilon$ to be
\begin{equation}
\label{uepsilon}
u_\epsilon=f_\epsilon(\sin\theta)\,u+(1-f_\epsilon(\sin\theta))\,u_0=(u-u_0)f_\epsilon(\sin\theta)+u_0.
\end{equation}
This gives $u_\epsilon|_{\Omega_I}=u_0|_{\Omega_I}$ and $u_{\epsilon}|_{\Omega_{III}}=u|_{\Omega_{III}}$. We also define 
\begin{equation}
\label{Mepsilondef}
{\cal
  M}^\epsilon= \mathcal M (\sigma_\epsilon,\omega_\epsilon,\psi_\epsilon,\chi_\epsilon),   
\end{equation}
and correspondingly ${\cal M}^\epsilon_\Omega$ and $\tilde{\cal
  M}^\epsilon_\Omega$ when the domain of integration is restricted to some region $\Omega$. We also denote by a superscript `$0$' these quantities
calculated for $u_0$.

We have all the ingredients needed to make use of the result in \cite{Hildebrandt77}. For this, let us consider now a fixed value of $\epsilon$, and the functions
$(\sigma,\omega,\chi,\psi)$ on the set $\Omega_{IV}$. By \cite{Hildebrandt77} we know that there exists one and only one set of functions that minimizes $\tilde{\cal M}$ on $\Omega_{IV}$ for
given boundary data, and that this function satisfies the Euler-Lagrange equations of $\tilde{\cal M}$ on $\Omega_{IV}$. By construction of $u_\epsilon$ we have that $u_\epsilon$ and $u_0$ have the same boundary values on $\Omega_{IV}$,
\begin{equation}
u_\epsilon|_{\partial\Omega_{IV}}=u_0|_{\partial\Omega_{IV}}.
\end{equation}
As we already know that $u_0$ is a solution of the Euler-Lagrange equations
of $\mathcal M$, and thus of $\tilde{\cal M}$ there, then $u_0$ is the only minimizer of $\tilde{\cal M}$ on $\Omega_{IV}$ with these boundary conditions. This means that $\tilde{\cal M}^\epsilon_{\Omega_{IV}}\geq\tilde{\cal M}^0_{\Omega_{IV}}$. Both ${\cal M}$ and $\tilde{\cal M}$ are well defined on $\Omega_{IV}$, and by \eqref{relMMt} their difference is just a constant. Therefore we also have ${\cal M}^\epsilon_{\Omega_{IV}}\geq{\cal M}^0_{\Omega_{IV}}$.

As we have already noted, $u_\epsilon|_{\Omega_{I}}=u_0|_{\Omega_{I}}$, and therefore ${\cal M}^\epsilon_{\Omega_I}={\cal M}^0_{\Omega_I}$. This together with the inequality in $\Omega_{IV}$ found above and the fact that $S^2=\Omega_I\cup\Omega_{IV}$ give
\begin{equation}\label{ineqMe}
{\cal M}^\epsilon\geq{\cal M}^0.
\end{equation}
Only the last step of the proof is lacking, that is, to show 
\begin{equation}\label{limitMe}
\lim_{\epsilon\rightarrow 0}{\cal M}^\epsilon={\cal M}.
\end{equation}
We write
\begin{equation}\label{m2}
  {\mathcal M}^\epsilon =\mathcal M_{\Omega_I}^\epsilon +\mathcal M_{\Omega_{II}}^\epsilon+\mathcal M_{\Omega_{III}}^\epsilon=\mathcal M_{\Omega_I}^0 +\mathcal M_{\Omega_{II}}^\epsilon+\mathcal M_{\Omega_{III}}
\end{equation}
Using the Dominated Convergence Theorem  it is not hard to see that the first integral in \eqref{m2} vanishes in the limit $\epsilon\to0$, since the domain reduces to the poles and we know that $\mathcal M_0$ is finite. Also, the third term in \eqref{m2} tends to $\mathcal M$ as $\Omega_{III}$ extends to cover the whole sphere.

To show that the second term in \eqref{m2} vanishes in the limit $\epsilon\to0$ we consider its different parts separately. We have
\begin{equation}
\label{Mepsilon}
{\cal
  M}^\epsilon_{\Omega_{II}}=\frac{1}{2\pi}\int_{\Omega_{II}}\left[|D\sigma_\epsilon|^2+4\sigma_\epsilon+\frac{|D\omega_\epsilon+2\chi_\epsilon D\psi_\epsilon-2\psi_\epsilon D\chi_\epsilon|^2}{e^{2\sigma_\epsilon}\sin^4\theta}+4\frac{|D\psi_\epsilon|^2+|D\chi_\epsilon|^2}{e^{\sigma_\epsilon}\sin^2\theta}\right]dS_0,   
\end{equation}
Using the definition of $u_\epsilon$ \eqref{uepsilon} we compute
\begin{equation}
Du_\epsilon= (u-u_0)Df_\epsilon+(Du-Du_0)f_\epsilon+Du_0.
\end{equation}
We see that
\begin{equation}\label{estimacion1}
\sigma_\epsilon\leq C
\end{equation}
because $\sigma$ and $\sigma_0$ are finite on $S^2$, and $f_\epsilon\leq1$. Here, and in what follows, we denote by $C$, $C_i$  constants independent of $\epsilon$. Also, because of the regularity of $\sigma$ and $\sigma_0$, we have 
\begin{equation}\label{estimacion2}
|D\sigma_\epsilon|^ 2\leq 3|Df_\epsilon|^ 2(\sigma-\sigma_0)^ 2+3|D\sigma-D\sigma_0|^ 2+3|D\sigma_0|\leq C_1|Df_\epsilon|^2+C_2.
\end{equation} 
Then, from \eqref{estimacion1}-\eqref{estimacion2} and using strongly the property \eqref{intChi} to bound the integral of $|Df_\epsilon|^2$, we conclude that the first two terms in $\mathcal M^\epsilon_{\Omega_II}$ go to zero as $\epsilon\to0$.

Now we work with the term
\begin{equation}
\int_{\Omega_{II}}\frac{|D\omega_\epsilon+2\chi_\epsilon D\psi_\epsilon-2\psi_\epsilon D\chi_\epsilon|^2}{e^{2\sigma_\epsilon}\sin^4\theta}dS_0.
\end{equation}
Using the fact that $f_\epsilon$ is bounded, and $\sigma, \sigma_0$ are regular we have 
\begin{eqnarray}
\frac{|D\omega_\epsilon+2\chi_\epsilon D\psi_\epsilon-2\psi_\epsilon D\chi_\epsilon|^2}{e^{2\sigma_\epsilon}\sin^4\theta}&\leq& C_1\frac{|Df|^2\left(\omega-\omega_0+2\psi_0\chi-2\chi_0\psi\right)^2}{\sin^4\theta}\nonumber\\
&+&C_2\frac{|D\omega+2\chi_0 D\psi-2\psi_0 D\chi|^2}{\sin^4\theta}\nonumber\\
&+&C_3\frac{|D\omega_0+2\chi_0 D\psi_0-2\psi_0 D\chi_0|^2}{\sin^4\theta}
+C_4\frac{|D\psi_0|^ 2(\chi-\chi_0)^ 2}{\sin^4\theta}\nonumber\\&+&C_5\frac{|D\chi_0|^ 2(\psi-\psi_0)^2}{\sin^4\theta}
+C_6\frac{(\chi-\chi_0)^2|D\psi-D\psi_0|^2}{\sin^4\theta}\nonumber\\&+&C_7\frac{|D\chi-D\chi_0|^2(\psi-\psi_0)^2}{\sin^4\theta}\label{terminoomega}
\end{eqnarray}

The term accompanying the constant $C_3$ is also pointwise bounded in $\Omega_{II}$ because extreme Kerr-Newman sphere satisfies the regularity item ii) in Definition \ref{regularset}.

In virtue of Definition \ref{regularset}, we find that the remaining terms in \eqref{terminoomega} are uniformly bounded in $\Omega_{II}$. Altogether we derive
\begin{equation}
\int_{\Omega_{II}}\frac{|D\omega_\epsilon+2\chi_\epsilon D\psi_\epsilon-2\psi_\epsilon D\chi_\epsilon|^2}{e^{2\sigma_\epsilon}\sin^4\theta}dS_0\leq \int_{\Omega_{II}}C_1|Df_\epsilon|^2+C_2 dS_0.
\end{equation}
It is important to remark that the l.h.s. in the above inequality is bounded when the potentials $\omega, \psi, \chi$ are smooth functions on $S^2$ satisfying condition i) in Definition \ref{regularset} and are such that $\mathcal M$ is finite, that is, condition ii) is no longer necessary.

In the limit $\epsilon\to0$ this integral vanishes by property \eqref{intChi}. 

Finally, we look at the term 
\begin{equation}\label{ultimo}
\int_{\Omega_{II}}\frac{|D\chi_\epsilon|^2+|D\psi_\epsilon|^2}{e^{\sigma_\epsilon}\sin^2\theta}dS_0.
\end{equation}
We have, as in \eqref{estimacion2}
\begin{equation}\label{estimacion3}
\frac{|D\chi_\epsilon|^ 2}{e^{\sigma_\epsilon}\sin^2\theta}\leq C\frac{|Df|^ 2(\chi-\chi_0)^ 2+|D\chi-D\chi_0|^ 2+|D\chi_0|^2}{\sin^2\theta}\leq C_1|Df_\epsilon|^2+C_2.
\end{equation}
where, in the first inequality we have used the boundedness of $\sigma, \sigma_0$. A similar behavior is found for the second term in \eqref{ultimo}. Therefore, taking into account the property \eqref{intChi} the limit $\epsilon\to0$ of \eqref{ultimo} gives zero. 

We have shown
\begin{equation}
\lim_{\epsilon\to0}\mathcal M^\epsilon_{\Omega_{II}}=0
\end{equation}
and thus the limit \eqref{limitMe}. This, together with \eqref{ineqMe} completes the proof of the Lemma.
\end{proof}

Now we present the proof of Lemma \ref{unic}, stating the uniqueness of the minimizer for the area in inequality \eqref{ine}. This is done by exploiting the properties of the distance between harmonic maps in the complex hyperbolic space.

\begin{proof}\textit{(of Lemma \ref{unic})}

We follow the lines of Weinstein \cite{Weinstein90} and Dain \cite{Dainmass08}. By contradiction, assume that there exists another regular set $\mathcal D^1$ which saturates \eqref{ine}. Denote with a subscrit 1 the quantites referred to this set (and with a subscript 0 the quantities referred to $\mathcal D_0$). Then we have 
\begin{equation}\label{areas}
A_1=4\pi e^{\frac{\mathcal M_0-8}{8}}=A_0.
\end{equation} 
Then, using $A_1\geq 4\pi e^{\frac{\mathcal M_1-8}{8}}$  and \eqref{areas} we find $\mathcal M_1=\mathcal M_0$. This means that $\mathcal D_1$ is also a  critical point of the functional $\mathcal M$, i.e. it is a harmonic map.

By hypothesis, the second solution has the same values of the angular momentum and charges. Let $the exp\Gamma$ be the poles on $S^2$, so 
\begin{equation}\label{condcontorno}
\omega_1\big|_{\Gamma}=\omega_0\big|_{\Gamma}=\pm4J,\quad \chi_1\big|_{\Gamma}=\chi_0\big|_{\Gamma}=\pm Q_{\rm M},\quad \psi_1\big|_{\Gamma}=\psi_0\big|_{\Gamma}=\pm Q_{\rm E}.
\end{equation}
But also, in virtue of equation \eqref{areas}, we conclude that $\sigma_1|_{\Gamma}=\sigma_0|_{\Gamma}=\ln(A_0/4\pi)$ (recall that the area is determined solely by the value of $\sigma$ on the poles, through the expression $A=4\pi e^{\sigma(0)}$). 

In what follows we will prove that the distance between these two solutions is in fact zero, and thus that the solutions are identical.

Let ($\eta_1,\omega_1,\chi_1,\psi_1$) and  ($\eta_0,\omega_0,\chi_0,\psi_0$) be two harmonic maps $S^2\setminus\Gamma\to \mathbb H^2_\mathbb C$, and consider, for each ($\theta,\phi$), the corresponding points in $\mathbb H^2_{\mathbb C}$ equipped with the hyperbolic metric introduced above in equation \eqref{hyperbolicmetric}.
 The distance $d$ between these two points is  given by (see \cite{Weinstein90})
\begin{equation}\label{dist}
\cosh d=1+\delta
\end{equation}
where
\begin{eqnarray}\label{delta}
\delta=\frac{(\omega_0-\omega_1+2\chi_0\psi_1-2\chi_1\psi_0)^2+((\chi_0-\chi_1)^2+(\psi_0-\psi_1)^2))^2}{2\eta_1\eta_0}+\nonumber\\
+\left(\frac{1}{\eta_1}+\frac{1}{\eta_0}\right)[(\chi_0-\chi_1)^2+(\psi_0-\psi_1)^2]+\frac{(\eta_0-\eta_1)^2}{2\eta_1\eta_0}.
\end{eqnarray}
Therefore, since the functions $\omega, \chi, \psi,\sigma$ are regular on $S^2$,  $d$ defines a function $d:S^2\to\mathbb R$. We use the results of Shoen and Yau \cite{SchoenYau} to deduce that the square distance between harmonic maps is a subharmonic function on $S^2$, that is
\begin{equation}
 \Delta d^2\geq0,
\end{equation}
and, since $\delta$ is a convex function of $d^2$, then
\begin{equation}
 \Delta\delta\geq0.
\end{equation}

Let us see now that the distance between the two solutions at $\Gamma$ is zero. Begin with the first term in \eqref{delta}. From item ii) in Definition \ref{regularset}, the following behavior is deduced 
\begin{equation}\label{deriv2}
(\partial_\theta^2\omega+2\chi\partial_\theta^2\psi-2\psi\partial_\theta^2\chi)|_\Gamma=0
\end{equation}
(note that if the functions $\omega, \psi, \chi$ satisfy condition i) in Definition \ref{regularset} and $\mathcal M$ is finite, then the solutions of the Euler-Lagrange equations of $\mathcal M$ necessarily have the above behavior near the poles).

Then \eqref{deriv2} together with the boundary conditions give near the poles
\begin{equation}
\omega_0-\omega_1+2\chi_0\psi_1-2\chi_1\psi_0=O(\sin^3\theta)
\end{equation}
which implies
\begin{equation}\label{compomega}
\left.\frac{(\omega_0-\omega_1+2\chi_0\psi_1-2\chi_1\psi_0)^2}{2\eta_1\eta_0}\right|_\Gamma=0
\end{equation}

We look now at the second and third terms in \eqref{delta}. Since by hypothesis  $\partial_\theta\psi|_\Gamma=\partial_\theta\chi|_\Gamma=0$, we find
\begin{equation}
\psi_0-\psi_1=\chi_0-\chi_1=O(\sin^2\theta).
\end{equation}
Therefore we obtain
\begin{equation}\label{comppsichi}
\left.\left[\frac{[(\chi_0-\chi_1)^2+(\psi_0-\psi_1)^2)]^2}{2\eta_1\eta_0}+\left(\frac{1}{\eta_1}+\frac{1}{\eta_0}\right)[(\chi_0-\chi_1)^2+(\psi_0-\psi_1)^2]\right]\right|_\Gamma=0
\end{equation}
The last term we must investigate in \eqref{delta} is the one involving $(\eta_0-\eta_1)$. We write it as
\begin{equation}
\frac{(\eta_0-\eta_1)^2}{2\eta_0\eta_1}=\cosh(\sigma_0-\sigma_1)-1,
\end{equation}
but taking into account the boundary conditions $\sigma_1|_\Gamma=\sigma_0|_\Gamma$, we find
\begin{equation}\label{term}
\frac{(\eta_0-\eta_1)^2}{2\eta_0\eta_1}\big|_\Gamma=0.
\end{equation}
With conditions \eqref{compomega}, \eqref{comppsichi}, and \eqref{term} one verifies that 
\begin{equation}
\delta|_\Gamma=0.
\end{equation}

Then, since $\delta$ is continuous (and smooth on $S^2\setminus\{0\}$) and non negative, $\delta\big|_\Gamma=0$, and $\Delta\delta\geq0$ on $S^2\setminus\{0\}$ we can use the standard Maximum Principle to conclude that $\delta=0$ in $S^2$. Therefore $d=0$ and the two maps are identical. This completes the proof of the Lemma.
\end{proof}

\subsection{Proof from geodesics in $\mathbb H_{\mathbb C}^2$}\label{prueba2}

We prove now Lemmas \ref{teo} and \ref{unic} in the axially symmetric case, with zero magnetic charge, namely $Q_{\rm M}=0$. We make $Q= Q_{\rm E}$. The case when magnetic charge is present can be easily obtained by rotating along the $(\chi,\psi)$-plane, noting that rotations along the $(\chi,\psi)$-plane leave the functional $\mathcal M$, \eqref{defM}, invariant. We assume that either $J$ or $Q$ are non-zero otherwise there is nothing to prove.

The fundamental fact allowing to prove Lemmas  \ref{teo}-\ref{unic} only in terms of geodesics in the complex hyperbolic plane $\mathbb{H}^{2}_{\mathbb C}$ is the following identity  (see equation \eqref{relMMt})
\be\label{FUND}
\tilde{\mathcal M}_{t_{1},t_{2}}={\mathcal M}_{\theta_{1},\theta_{2}}+4\sigma \cos \theta\bigg|_{\theta_{1}}^{\theta_{2}}+4\ln \tan \frac{\theta}{2}\bigg|_{\theta_{1}}^{\theta_{2}}
\ee
where $t=\ln \tan \frac{\theta}{2}$ and
\begin{gather}\label{Energy}
\mathcal{M}_{\theta_{1},\theta_{2}}:=\int_{\theta_{1}}^{\theta_{2}}\bigg(\sigma'^{2}+4\sigma+\frac{(\omega'+2\chi\psi'-2\psi\chi')^{2}}{\eta^{2}}+4\frac{\psi'^{2}+\chi'^{2}}{\eta}\bigg)\sin\theta d\theta,\\
\tilde{\mathcal{M}}_{t_{1},t_{2}}:=\int_{t_{1}}^{t_{2}}\bigg( \frac{\dot\eta^{2}}{\eta^{2}}+\frac{(\dot \omega+2\chi\dot\psi-2\psi\dot\chi)^{2}}{\eta^{2}}+4\frac{\dot\chi^{2}+\dot\psi^{2}}{\eta}\bigg)\ dt
\end{gather}
 Equation \eqref{FUND}  shows that for fixed Dirichlet boundary conditions, critical points of ${\mathcal M}_{\theta_{1},\theta_{2}}$ are critical points of $\tilde{\mathcal M}_{t_{1},t_{2}}$ and vice versa. Now, consider $\gamma(t)=(\eta,\omega,\psi,\chi)(t)$, in $\mathbb{H}_{\mathbb{C}}^{2}$ with metric $g_H$ given by \eqref{hyperbolicmetric}. Then we have the remarkable relation
\be
\tilde{\mathcal M}_{t_{1},t_{2}}=\int_{t_{1}}^{t_{2}} g_{H}(\dot{\gamma},\dot{\gamma})dt
\ee
which shows that critical points of the later functional are geodesics in the complex hyperbolic plane up to an afine transformation, namely $\gamma(t)=\xi(\alpha t+\beta)$ with $\xi(s)$ a geodesic parametrized by arc length $s$.  Moreover, because 
\be\label{MGE}
\tilde{\mathcal M}_{t_{1},t_{2}}(\gamma)\geq \frac{{\rm length}_{\chp}^{2}(\gamma)}{t_{2}-t_{1}}\geq \frac{{\rm dist}^{2}_{\chp}(\gamma_{1},\gamma_{2})}{t_{2}-t_{1}}
\ee
global minimizers are exactly those critical points $\gamma(t)=\xi(\alpha t+\beta)$ for which $\xi$ is a length minimizing geodesic between $\gamma(t_{1})$ and $\gamma(t_{2})$.

The following Lemma, which is constructed on the previous observation, is central to prove the Lemmas \ref{teo}-\ref{unic}. The proofs are given afterwards.   

\begin{lemma} \label{lema1ax}

\
\begin{enumerate}
\item[\rm (1)] There exists a unique smooth minimizer ${\mathcal D}=(\sigma,\omega,\psi,\chi)$ for the functional $\mathcal M_{\theta_1,\theta_2}$ with given Dirichlet boundary conditions ${\mathcal D}(\theta_1)$, ${\mathcal D}(\theta_2)$. Moreover $\gamma(t)=(\eta,\omega,\psi,\chi)(t)=\xi(\alpha t+\beta)$ where $\xi(s)$ is a geodesic of $\mathbb{H}^{2}_{\mathbb C}$ parametrized by arc-length $s$. 

\item[\rm (2)] The general expression for the unique minimizer of $\tilde{\mathcal M}_{t_1,t_{2}=-t_{1}}$ with centered boundary data $(\eta,\omega,\psi,\chi)|_{t_{1}}=(\eta,-\omega,-\psi,-\chi)|_{t_{2}=-t_{1}}$, $\chi(t_{1})=0, \omega(t_{1})\neq 0$, is given by
\begin{align}\label{u1}
&\eta=\left(\frac{1}{2}c_{5}^2+\frac{1}{2}\sqrt{4c_{1}^{2}+c_{5}^{4}}\cosh \alpha t\right)^{-1},\\
\label{fff} &\chi+i\psi=c_4e^ {if}+\frac{c_3}{c_1},\text{ with } f=-2\arctan\left(\frac{\sqrt{4c_1^2+c_5^4}-c_5^2}{2c_1}\tanh \frac{\alpha t}{2}\right),\\
&\label{fffu} \omega= -\frac{\alpha}{2c_{1}}\sqrt{1-c_{1}^{2}\eta^{2}-4c_{1}^{2}c_{4}^{2}\eta}-2\alpha \psi\chi+\frac{\alpha c_{3}}{c_{1}}\psi
\end{align}
where $ c_1\neq 0$, $c_3$, $c_4$ and $c_5 =2c_1 c_4$ are constants uniquely determined by the boundary conditions at $t_1$ and $-t_1$.
\item[\rm (3)] For any positive sequence $\theta^{i}\rightarrow 0$ and sequence $\{(\sigma_{1}^{i},\omega_{1}^{i},\chi_{1}^{i},\psi_{1}^{i})\}$, such that
\begin{gather} 
\lim \sigma_{1}^{i}=\sigma_{l}\neq\infty,    \qquad    \lim \omega_{1}^{i}=-4J,\qquad  \lim \psi_{1}^{i}=-Q:=-Q_{\rm E}\text{ and,}\\ \chi_{1}^{i}=0, \omega^{i}_{1}\neq 0, \text{ for all $i$,}     
\end{gather}
the unique minimizer ${\mathcal{D}}^{i}(\theta)=(\sigma^{i},\omega^{i},\chi^{i},\psi^{i})$ of ${\mathcal{M}}_{\theta^{i},\pi-\theta^{i}}$ with boundary data
\begin{gather}\label{BD}
{\mathcal{D}}^{i}(\theta^{i})=(\sigma_{1}^{i},\omega_{1}^{i},\chi_{1}^{i},\psi_{1}^{i}),\ 
{\mathcal{D}}^{i}(\pi-\theta^{i})=(\sigma_{1}^{i},-\omega_{1}^{i},-\chi_{1}^{i},-\psi_{1}^{i})
\end{gather}
has
\begin{equation}\label{X79}
\sigma^{i}(\theta)=-\ln \bigg[\frac{1}{2}(c_{5}^{i})^{2}\sin^{2}\theta+\frac{1}{2} \sqrt{4(c_{1}^{i})^{2}+(c_{5}^{i})^{4}}\sin^{2}\theta \cosh \alpha^{i}t\bigg],
\end{equation}
for constants $c_{1}^{i},c_{5}^{i},\alpha^{i}$ and where, as before, $t=\ln \tan \frac{\theta}{2}$. Moreover as $\theta^i\to0$
\begin{equation}\label{X80}
\lim \alpha^{i}= 2 ,\qquad 
\lim\  c_{5}^{i}=\frac{-Q}{\sqrt{Q^{4}+4J^{2}}},\qquad
\lim\ c_{1}^{i}=\frac{J}{Q^{4}+4J^{2}},
\end{equation}
and if we write 
\begin{equation}
\sigma_{l}=\frac{1}{2}\ln (Q^{4}+4J^{2})+\Gamma,
\end{equation}
then 
\begin{equation}\label{X82}
\lim\ \left(\frac{\theta^{i}}{2}\right)^{\alpha^{i}-2}=e^{\Gamma}.
\end{equation}
\end{enumerate}
\end{lemma}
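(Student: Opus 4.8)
The plan is to treat the three parts in sequence, using throughout the identity (\ref{FUND}), which reduces the variational problem for $\mathcal{M}_{\theta_1,\theta_2}$ to the geodesic (energy) problem for $\tilde{\mathcal{M}}_{t_1,t_2}$ in $\chp$, together with the change of variable $t=\ln\tan(\theta/2)$ under which $\sin\theta=\mathrm{sech}\,t$, $\cos\theta=-\tanh t$ and $\eta=e^{\sigma}\sin^2\theta$. For part (1) I would note that $(\chp,g_H)$ is a complete, simply connected manifold of strictly negative (pinched) sectional curvature, i.e. a Cartan--Hadamard manifold, so any two points are joined by a unique geodesic. Since $t=\ln\tan(\theta/2)$ is a diffeomorphism carrying $[\theta_1,\theta_2]\subset(0,\pi)$ to a bounded interval $[t_1,t_2]$, and since by (\ref{FUND}) the functionals $\mathcal{M}_{\theta_1,\theta_2}$ and $\tilde{\mathcal{M}}_{t_1,t_2}$ differ, for fixed Dirichlet data, only by boundary terms, minimizing $\mathcal{M}_{\theta_1,\theta_2}$ is equivalent to minimizing the energy $\tilde{\mathcal{M}}_{t_1,t_2}=\int_{t_1}^{t_2}g_H(\dot\gamma,\dot\gamma)\,dt$ over curves with fixed endpoints. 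By the Cauchy--Schwarz estimate (\ref{MGE}), this energy is minimized uniquely by the constant-speed geodesic joining the endpoints, namely $\gamma(t)=\xi(\alpha t+\beta)$ with $\xi$ the arc-length geodesic; smoothness is inherited from the geodesic ODE, and transporting back through (\ref{FUND}) yields the claim.

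For part (2) I would integrate the geodesic equations for $g_H$ via their first integrals. Because $\omega$ is cyclic, the momentum $(\dot\omega+2\chi\dot\psi-2\psi\dot\chi)/\eta^2=c_1$ is conserved (the integrated form of (\ref{ecuomega})); writing $w=\chi+i\psi$, the pair of equations (\ref{echf}) combines into $\frac{d}{dt}(\dot w/\eta)=-ic_1\dot w$, from which one checks directly that $|\dot w|^2/\eta^2$ is a second conserved quantity; and constant speed gives $g_H(\dot\gamma,\dot\gamma)=\alpha^2$. Substituting these into the speed relation collapses the system to a single separable ODE for $u=1/\eta$, of the form $\dot u^2=\alpha^2u^2-4c^2u-c_1^2$ (up to the constant rescaling that absorbs $\alpha$ into the paper's constants), whose solution is the $\cosh$ profile (\ref{u1}). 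With $\eta$ known, $\dot f=-c_1\eta$ integrates by the standard $\int dt/(a+b\cosh\alpha t)$ to the $\arctan(\cdots\tanh(\alpha t/2))$ form (\ref{fff}), exhibiting $w$ as running along a circle, and $\dot\omega=c_1\eta^2-2\chi\dot\psi+2\psi\dot\chi$ integrates to (\ref{fffu}). The centered data $(\eta,-\omega,-\psi,-\chi)|_{-t_1}=(\eta,\omega,\psi,\chi)|_{t_1}$ with $\chi(t_1)=0$ forces $\eta$ even and $\omega,\psi,\chi$ odd in $t$, killing the translation constant and fixing $c_1,c_3,c_4,c_5=2c_1c_4$; uniqueness follows from part (1). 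These integrations are routine but lengthy.

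For part (3), formula (\ref{X79}) is immediate on inserting (\ref{u1}) into $\sigma=\ln\eta-\ln\sin^2\theta$. The substance is the limit $\theta^i\to0$, i.e. $t_1^i=\ln\tan(\theta^i/2)\to-\infty$. I would impose the boundary conditions (\ref{BD}) on the explicit minimizer of part (2), obtaining a system for $(\alpha^i,c_1^i,c_4^i,c_5^i)$ in which $\psi(t_1^i)\to-Q$, $\chi(t_1^i)=0$ encode the charge and $\omega(t_1^i)\to-4J$ the angular momentum. Expanding this system asymptotically, using $\cosh\alpha^i t_1^i\sim\frac12 e^{-\alpha^i t_1^i}$ and $\sin^2\theta^i=\mathrm{sech}^2t_1^i\sim4e^{2t_1^i}$, the leading balance that keeps $\psi$ and $\omega$ finite as the endpoints recede to the ideal boundary $\{\eta=0\}$ of $\chp$ forces $\alpha^i\to2$ and pins $c_5^i\to-Q/\sqrt{Q^4+4J^2}$, $c_1^i\to J/(Q^4+4J^2)$; these are exactly the constants identifying the limiting geodesic with the extreme Kerr--Newman set $\mathcal{D}_0$, as one verifies by comparing (\ref{fff}) at $\alpha=2$ (where $\tanh t=-\cos\theta$) with (\ref{anglef}). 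Finally, evaluating (\ref{X79}) at $\theta=\theta^i$ gives $\sigma_1^i\sim-\frac12\ln[4(c_1^i)^2+(c_5^i)^4]-(\alpha^i-2)|t_1^i|$; since $4(c_1^i)^2+(c_5^i)^4\to1/(Q^4+4J^2)$ and $|t_1^i|=-\ln(\theta^i/2)+o(1)$, the condition $\sigma_1^i\to\sigma_l=\frac12\ln(Q^4+4J^2)+\Gamma$ yields $(\alpha^i-2)\ln(\theta^i/2)\to\Gamma$, which is (\ref{X82}).

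I expect the main obstacle to be part (3): solving the boundary system asymptotically and, in particular, proving rather than assuming $\alpha^i\to2$, while controlling error terms uniformly as both endpoints simultaneously recede to the ideal boundary of $\chp$ and the interval $[t_1^i,-t_1^i]$ expands to all of $\mathbb{R}$. This singular limit is precisely what couples the free boundary datum $\sigma_l$ (through $\Gamma$) to the rate at which $\alpha^i\to2$, and it is the delicate mechanism underlying the sharpness of (\ref{ine}).
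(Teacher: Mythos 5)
Parts (1) and (2) of your proposal follow essentially the same route as the paper: reduction via (\ref{FUND}) to the energy functional on curves in $\chp$, uniqueness of minimizing geodesics in a simply connected negatively curved target, and integration of the geodesic flow through the conserved momenta. (A cosmetic difference: for $\omega$ the paper avoids the direct integration of $\dot\omega$ that you propose and instead reads off $\omega$ algebraically from the conserved quantity of the dilation-type Killing field $X_4=2\eta\partial_\eta+2\omega\partial_\omega+\chi\partial_\chi+\psi\partial_\psi$, whose value is fixed to zero by the $t\mapsto -t$ antisymmetry of the centered data.) Your asymptotic formulas in part (3) — in particular $\sigma^i_1\sim-\tfrac12\ln[4(c_1^i)^2+(c_5^i)^4]-(\alpha^i-2)|t_1^i|$ and the resulting derivation of (\ref{X82}) from (\ref{X80}) — are correct.

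The genuine gap is in part (3), and you have named it yourself: you compute the \emph{forward} asymptotics (boundary values as functions of the parameters) and then assert that "the leading balance forces $\alpha^i\to2$" and pins $c_1^i$, $c_5^i$, but the lemma requires the \emph{inverse} statement: the minimizers exist by part (1), their parameters $(\alpha^i,c_1^i,c_5^i)$ are therefore determined by the boundary data, and one must prove these parameters converge to the claimed limits. A leading-order balance does not by itself rule out, say, a subsequence along which $\alpha^i$ stays bounded away from $2$ or the parameters escape to infinity while still matching the (convergent) boundary data. The paper closes this by a reparametrization-and-continuity device rather than by asymptotic inversion: set $\alpha=2+\Gamma/\ln(\theta_1/2)$ and consider the map $F_{\theta_1}:(\Gamma,c_1,c_5)\mapsto(\sigma(\theta_1),\omega(\theta_1),\psi(\theta_1))$. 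One shows $F_{\theta_1}$ converges, uniformly on compact sets, to an explicit limit $F_0(\Gamma,c_1,c_5)=\bigl(\Gamma-\tfrac12\ln(4c_1^2+c_5^4),\,-4c_1/(4c_1^2+c_5^4),\,c_5/(4c_1^2+c_5^4)\bigr)$, which extends to a diffeomorphism; since $F_0$ maps $(\Gamma^\infty,c_1^\infty,c_5^\infty)=\bigl(\sigma_l-\tfrac12\ln(4J^2+Q^4),\,J/(4J^2+Q^4),\,-Q/\sqrt{4J^2+Q^4}\bigr)$ to $(\sigma_l,-4J,-Q)$, the convergent boundary data $(\sigma_1^i,\omega_1^i,\psi_1^i)$ pull back to a convergent parameter sequence with exactly the limits (\ref{X80})--(\ref{X82}). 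To complete your proof you would need either this argument or an equivalent compactness/injectivity statement for the boundary-value map; without it the key assertion $\alpha^i\to2$ remains unproven, as you acknowledge.
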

\begin{proof}\

(1) As they differ in a constant, a global minimizer for ${\mathcal M}_{\theta_{1},\theta_{2}}$ is a global minimizer for $\tilde{\mathcal M}_{t_{1},t_{2}}$. Moreover, as explained above, the  later is of the form $\gamma(t)=\xi(\alpha t+\beta)$ with $\xi(s)$ a geodesic parametrized by arc-length $s$ and realizing the distance ${\rm dist}_{\mathbb{H}_{\mathbb{C}}^{2}}(\gamma(t_{1}),\gamma(t_{2}))$ between $\gamma(t_{1})$ and $\gamma(t_{2})$. If $\xi(s=0)=\gamma(t_{1})$ and $\xi(s={\rm dist}_{\mathbb{H}^{2}_{\mathbb{C}}})=\gamma(t_{2})$ (which can always be chosen to be by redefining $s$ if necessary) then $\alpha$ and $\beta$ are unique and determined by $t_{1},t_{2}$ and ${\rm dist}_{\mathbb{H}_{\mathbb{C}}^{2}}(\gamma(t_{1}),\gamma(t_{2}))$. But because $\mathbb{H}^{2}_{\mathbb{C}}$ has negative sectional curvature and is simply-connected, then between two different points $\gamma(t_{1})$ and $\gamma(t_{2})$, there is always a unique minimizing geodesics $\xi(s)$, with $\xi(0)=\gamma(t_{1})$ and $\xi({\rm dist}_{\mathbb{H}_{\mathbb{C}}^{2}}(\gamma(t_{1}),\gamma(t_{2})))=\gamma(t_{2})$. It follows that the global minimizer of ${\mathcal M}_{\theta_{1},\theta_{2}}$, exists, is unique, and has the desired form.

(2) We describe how to obtain a general expression for the unique minimizers of $\tilde{\mathcal M}_{t_{1},t_{2}=-t_{1}}$ whose boundary data satisfy 
\be\label{PR}
(\sigma(t_{1}),\omega(t_{1}),\psi(t_{1}),\chi(t_{1}))=(\sigma(t_{2}),-\omega(t_{2}),-\psi(t_{2}),-\chi(t_{2})),\ \chi(t_{1})=0,
\ee
with  $\omega(t_{1})\neq 0$. The Euler-Lagrange equations for $\tilde{\mathcal M}_{t_{1},t_{2}}$ are integrable and the first integrals can be obtained as conserved quantities of the form $g_{H}(X,\dot\gamma)$ which arise from Killing fields $X^a$ for $g_{H}$. The Killing fields we will use are
\begin{equation}
X_{1}=\partial_{\omega},\qquad X_{2}=-2\psi\partial_{\omega}+\partial_{\chi},\qquad X_{3}=2\chi\partial_{\omega}+\partial_{\psi}
\end{equation}
The corresponding conserved quantities can be combined to give
\begin{equation}\label{EQchi}
\frac{\dot\omega+2\chi\dot\psi-2\psi\dot\chi}{\eta^2}=\alpha c_{1},\qquad \alpha c_{1}\psi-\frac{\dot\chi}{\eta}=\alpha c_{2},\qquad \alpha c_{1}\chi+\frac{\dot\psi}{\eta}=\alpha c_{3}
\end{equation}
where $c_{1},c_{2}$ and $c_{3}$ are constants and we have inserted explicitly the  (positive) constant $\alpha$ (introduced in item (1) before).  Note that $c_{1}\neq 0$ for if $c_{1}=0$ then (\ref{PR}) and (\ref{EQchi}) imply $\omega$ identically zero which contradicts $\omega(t_{1})\neq 0$. To obtain the equation for $\eta$ (or equivalently, for $\sigma$) we use $g_{H}(\dot\gamma,\dot\gamma)=\alpha^{2}$ and thus
\be\label{EQu0}
\frac{\dot\eta^{2}}{\eta^2}+\frac{(\dot\omega+2\chi\dot{\psi}-2\psi\dot{\chi})^{2}}{\eta^2}+4\frac{\dot\chi^{2}+\dot\psi^{2}}{\eta}=\alpha^2
\ee
Equations \eqref{EQchi}-\eqref{EQu0} are indeed equivalent to the equations of motion obtained from the variation of (\ref{Energy}), cf. \eqref{ech1}-\eqref{echf}.
These equations can be further simplified by using an important property of the variables $(\psi,\chi)$. By making $\bar{\psi}=\psi-c_{2}/c_{1}$ and $\bar{\chi}=\chi-c_{3}/c_{1}$,  the second and third equations in (\ref{EQchi}) reduce to
\begin{equation}\label{EQchi2}
\alpha c_{1}\bar\psi-\frac{\dot{\bar\chi}}{\eta}=0,\qquad \alpha c_{1}\bar\chi+\frac{\dot{\bar\psi}}{\eta}=0.
\end{equation}
Multiplying these two equations respectively by $\bar{\chi}$ and $\bar{\psi}$ and substracting one from the other we obtain $\bar{\chi} \dot{\bar{\chi}}+\bar{\psi}\dot{\bar{\psi}}=0$ which implies $\bar{\chi}^{2}+\bar{\psi}^{2}=c^{2}_{4}$  where $c_{4}$ is a constant. We write
\begin{equation}\label{f}
\bar{\chi}+i\bar{\psi}=c_{4}e^{if}, \qquad \mbox{with}\qquad \dot f=-c_{1}\alpha \eta.
\end{equation}
Now, since $\dot{\bar{\chi}}=\dot\chi$ and $\dot{\bar{\psi}}=\dot\psi$, then 
\be\label{chipsis}
\dot{\chi}^{2}+\dot{\psi}^{2}=\alpha^ 2c^{2}_{4}c_{1}^{2}\eta^2.
\ee 
We use equations (\ref{EQchi}) and (\ref{chipsis}) to rewrite \eqref{EQu0} as
\be\label{EQuI}
\frac{\dot{\eta}^2}{\eta^2}+\alpha^ 2c_{1}^{2}\eta^2+4\alpha^ 2 c_{1}^{2}c_{4}^{2}\eta=\alpha^2.
\ee
We now solve equation \eqref{EQuI} for $\eta$ and use $\eta(t_{1})=\eta(t_{2}),\ t_{1}=-t_{2}$, and find \eqref{u1}
with $c_5^2:=4c_1^ 2c_4^2$.

Now we solve for $\psi,\chi$. Note that in order to have $\chi(t_1)=\chi(t_2)=0$ and $\psi(t_1)=-\psi(t_2)$ and at the same time $\bar{\psi}^{2}+\bar{\chi}^{2}=c_{4}^{2}$ the only possibility is to have 
$c_2=0$ and therefore
\be\label{chipsi}
\chi+i\psi=c_{4}e^{if}+\frac{c_{3}}{c_1}. 
\ee
We obtain $f$ by integrating the second equation in line \eqref{f}, using \eqref{u1} and $f(t_1)=-f(t_2)$, to find \eqref{fff}
To find $\psi$ and $\chi$ use (\ref{chipsi}) where $c_{3}$ is adjusted from $c_{1},c_{5},\alpha$ to have $\chi(t_{1})=\chi(t_{2})=0$.
To find $\omega$ on the other hand one could use the expression for $\dot{\omega}$ in (\ref{EQchi}) and integrate. However a direct and simple expression for $\omega$ arises when using the conserved quantity associated to the Killing field
$X_{4}=2\eta\partial_{\eta}+2 \omega\partial_{\omega}+\chi\partial_{\chi}+\psi\partial_{\psi}$. Explicitly 
\begin{equation}\label{x4}
 g_{H}(X_{4},\dot\gamma)=\frac{\dot\eta}{2\eta}+ \frac{c_{1}\alpha}{2} \omega +\frac{\chi\dot\chi+\psi\dot\psi}{\eta}=c
\end{equation} 
Noting that the above expression is antisymmetric in $t$ around $t=0$ we deduce that the constant $c$ is zero. Then, from \eqref{x4} one obtains a direct expression for $\omega$.  The expressions that one obtains for $\omega,\psi,\chi$ in this form are somehow crude, but serve well to the purposes of the proof of (3).  Summarizing, given $c_{1} \neq ,c_{5},\alpha,\theta_{1}$ one can associate, following the construction  above, a unique solution $\gamma(t)$ satisfying (\ref{PR}) with $-t_{2}=t_{1}=\ln \tan \frac{\theta_{1}}{2}$.  

(3) A priori, to prove item (3) one could calculate the constants $(c_{1}^{i},c_{5}^{i},\alpha^{i})$ from the prescribed boundary data at $\theta^{i},\pi-\theta^{i}$ and prove from them, by a direct calculation, the conclusions (\ref{X80}) and (\ref{X82}).  Unfortunately such procedure is a computational nuisance. For this reason we follow an alternative argument. 
Given $\theta_{1} >0$ consider the map $F_{\theta_{1}}:\mathbb{R}^{3}\setminus \{ y=0\}\rightarrow \mathbb{R}^{3}$ that to $(\Gamma,c_{1},c_{5})$ associates the boundary values $(\sigma(\theta_{1}),\omega(\theta_{1}),\psi(\theta_{1})$ of the solution $(\sigma,\omega,\psi,\chi)$ found from the constants $(c_{1},c_{5},\alpha,\theta_{1})$ where $\alpha$ is given by
\be
\alpha=2+\frac{\Gamma}{\ln \theta_{1}/2}
\ee
Then, if we let $\theta_{1}\rightarrow 0$, from (\ref{u1}) and the limit $\lim_{\theta_{1}\rightarrow 0} \sin^{2}\theta_{1} \cosh (\alpha\ln \frac{\theta_{1}}{2})=\frac{2}{\Gamma}$, we obtain $\lim \sigma(\theta_{1})=\Gamma-\frac{1}{2}\ln (4c_{1}^{2}+c_{5}^{4})$. Next, using that $\psi(\theta_{1})=c_{4}\sin f(\theta_{1})$ and that $f(\theta_{1})\rightarrow \arctan 2c_{1}/c_{5}^{2}$ we find $\lim\psi(\theta_1)=\frac{ c_{5}}{\sqrt{4c_{1}^{2}+c_{5}^{4}}}$. Finally, from (\ref{x4}), (\ref{EQu0}) and the expression
\be
\frac{\psi\dot\psi}{\eta}\bigg|_{\theta_{1}}=-c_{1} \alpha c_{4}^{2}\sin f\cos f\bigg|_{\theta{1}}\rightarrow -\frac{c_{5}^{4}}{4c_{1}^{2}+c_{5}^{4}}
\ee
we get $\lim\omega(\theta_1)=\frac{-4c_{1}}{4c_{1}^{2}+c_{5}^{4}}$. This shows that the maps $F_{\theta_{1}}$ converge uniformly on any compact set to a map $F_{0}$ given by
 \be
F_{0}(\Gamma,c_{1},c_{5})=(\Gamma-\frac{1}{2}\ln (4c_{1}^{2}+c_{5}^{4}),-\frac{4c_{1}}{4c_{1}^{2}+c_{5}^{4}},\frac{c_{5}}{4c_{1}^{2}+c_{5}^{4}}).
\ee
Moreover the map $F_{0}$  extends to a diffeomorphism from $\mathbb{R}^{3}\setminus (\{y=0\}\cap \{z=0\})$ into $\mathbb{R}^{3}\setminus (\{y=0\}\cap \{z=0\})$. A close inspection of the limits above shows that the maps $F_{\theta}$ do extend smoothly too.
We note now that given the values $(\sigma_{l},J,Q_{\rm E}=Q,0)$  (with either $J$ or $Q$ non-zero) if we take 
\be
(\Gamma^{\infty},c_{1}^{\infty},c_{5}^{\infty})=\left(\sigma_{l}-\frac{1}{2}\ln \big[ 4J^{2}+Q^{4}\big], \frac{J}{4J^{2}+Q^{4}},\frac{-Q}{\sqrt{4J^{2}+Q^{4}}}\right)
\ee
then $F_{0}(\Gamma^{\infty},c_{1}^{\infty},c_{5}^{\infty})=(\sigma_{l},J,Q)$. It follows therefore from the above argument that given $(\sigma_{l},J,Q)$  (with either $J$ or $Q$ non-zero) and sequences $\{\theta^{i}\rightarrow 0\}$ and $\{(\sigma^{i},\omega^{i},\psi^{i},\chi^{i})\}$ as in the hypothesis of (3), then there is a sequence $\{(\Gamma^{i},c_{1}^{i},c_{5}^{i})\}$ with limit $(\Gamma^{\infty},c_{1}^{\infty},c_{5}^{\infty})$ such that,  $F(\Gamma^{i},c_{1}^{i},c_{5}^{i})=(\sigma^{i},\omega^{i},\psi^{i})$ and therefore the unique minimizer of ${\mathcal M}_{\theta_{1}^{i},\pi-\theta_{1}^{i}}$ with boundary data (\ref{BD}) is the unique solution constructed out of $(c_{1}^{i},c_{5}^{i},\alpha^{i},\theta_{1}^{i})$ where $\alpha^{i}=2+\frac{\Gamma^{i}}{\ln \theta_{1}^{i}/2}$ . The expressions (\ref{X79}), (\ref{X80}) and (\ref{X82}) are readily checked. This finishes (3) and the proof of the Lemma.
\end{proof}

\begin{proof} \textit{(of Lemma \ref{teo})} Consider any sequence $\{\theta^{i}_{1}\downarrow 0\}$. Now we divide the interval $(0,\pi)$ in three parts, and write, for the set $\mathcal D$
\begin{equation}
{\mathcal{M}}({\mathcal D})={\mathcal{M}}_{0,\theta^{i}_{1}}+{\mathcal{M}}_{\pi-\theta^{i}_{1},\pi}+{\mathcal{M}}_{\theta^{i}_{1},\pi-\theta^{i}_{1}}.
\end{equation}
Then we recall the relation between $\tilde{\mathcal M}_{\theta^{i}_{1},\pi-\theta^{i}_{1}}$ and $\mathcal M_{\theta^{i}_{1},\pi-\theta^{i}_{1}}$ which is
\begin{equation}\label{rel}
 \tilde{\mathcal M}_{\theta^{i}_{1},\pi-\theta^{i}_{1}}(\gamma)=\mathcal M_{\theta^{i}_{1},\pi-\theta^{i}_{1}}({\mathcal D})+4\sigma\cos\theta\bigg|^{\pi-\theta^{i}_{1}}_{\theta^{i}_{1}}+4\cos\theta\bigg|^{\pi-\theta^{i}_{1}}_{\theta^{i}_{1}}+4\ln\tan\frac{\theta}{2}\bigg|^{\pi-\theta^{i}_{1}}_{\theta^{i}_{1}}
\end{equation}
where of course $\gamma$ represents the same data as ${\mathcal D}$.  Using this we would like to get a sharp estimation from below to $\tilde{\mathcal M}_{\theta_{1}^{i},\pi-\theta^{i}_{1}}$. For this we proceed as follows. For every $i$ consider two points in $\chp$, denoted by $\bar{\gamma}_{\theta^{i}_{1}}, \bar{\gamma}_{\pi-\theta^{i}_{1}}$ and given by 
\be
\bar{\gamma}_{\theta^{i}_{1}}=(\eta(\theta^{i}_{1}),-4J,-Q_{\rm E},-Q_{\rm M}=0),\ \bar{\gamma}_{\pi-\theta^{i}_{1}}=(\eta(\theta^{i}_{1}),+4J,+Q_{\rm E},+Q_{\rm M}=0)
\ee
 if $J\neq 0$, while if $J=0$ then we replace $4J$ in the above expressions by $\hat{\omega}^{i}_{1}$ tending to zero sufficiently fast (see below). This is because below we will need to use Lemma \ref{lema1ax} (3), for the minimizers with boundary data  $\bar{\gamma}_{\theta^{i}_{1}}$ and $\bar{\gamma}_{\pi-\theta^{i}_{1}}$, but Lemma \ref{lema1ax} requires non-zero boundary values for $\omega$. From the regularity at the poles one easily deduces that  (if $J\neq 0$, or if $\hat{\omega}^{i}_{1}$ goes to zero sufficiently fast) ${\rm dist}_{\chp}(\bar{\gamma}_{\theta^{i}_{1}}, \gamma(\theta^{i}_{1}))\rightarrow 0$ (see expression (\ref{dist})) and similarly for ${\rm dist}_{\chp}(\bar{\gamma}_{\pi - \theta^{i}_{1}},\gamma(\pi - \theta^{i}_{1}))$. Consider now any another sequence $\{\theta^{i}_{2}\downarrow 0\}$, such that $\theta^{i}_{2}<\theta^{i}_{1}$ for every $i$ and 
\begin{gather}
\label{RR} \lim \frac{{\rm dist}^{2}_{\chp}(\gamma(\theta^{i}_{1}),\bar{\gamma}_{\theta^{i}_{1}})}{t(\theta^{i}_{2})-t(\theta^{i}_{1})}=\lim \frac{{\rm dist}^{2}_{\chp}(\gamma(\pi-\theta^{i}_{1}),\bar{\gamma}_{\pi-\theta^{i}_{1}})}{t(\theta^{i}_{2})-t(\theta^{i}_{1})}=0,\\
\label{RRR} \lim t(\theta^{i}_{1})-t(\theta^{i}_{2})=0, 
\end{gather}
with, again, $t(\theta)=\ln \tan \frac{\theta}{2}$. Finally consider the curve in $\chp$, denoted by $\bar{\gamma}^{i}$, starting at $\bar{\gamma}_{\theta^{i}_{1}}$ and ending at $\bar{\gamma}_{\pi-\theta^{i}_{1}}$ defined as
\begin{enumerate}
\item the minimizer of $\tilde{\mathcal M}_{t(\theta_{2}^{i}),t(\theta_{1}^{i})}$, with boundary data $\bar{\gamma}_{\theta^{i}_{1}}$, $\gamma(t(\theta^{i}_{1}))$, if $t\in [t(\theta^{i}_{2}),t(\theta_{1}^{i})]$, 
\item $\gamma(t)$ if $t\in [t(\theta^{i}_{1}),t(\pi-\theta^{i}_{1})]$,
\item the minimizer of $\tilde{\mathcal M}_{t(\pi-\theta_{1}^{i}),t(\pi- \theta_{2}^{i})}$, with boundary data $\gamma(t(\pi-\theta^{i}_{1}))$, $\bar{\gamma}_{\pi-\theta^{i}_{1}}$, if $t\in [t(\pi-\theta^{i}_{1}),t(\pi-\theta_{2}^{i})]$
\end{enumerate}
By (\ref{MGE}) we can write 
\be\label{RII}
\tilde{\mathcal M}_{t(\theta^{i}_{2}),t(\pi-\theta^{i}_{2})}(\bar{\gamma}^{i})\geq \alpha_{i}^{2}(t(\pi-\theta^{i}_{2})-t(\theta^{i}_{2}))=-2\alpha_{i}^{2}\ln \tan \frac{\theta^{i}_{2}}{2}
\ee
where $\alpha_{i}$ is the constant associated to the minimizer of $\tilde{\mathcal M}_{t(\theta^{i}_{2}),t(\pi-\theta^{i}_{2})}$ with boundary data $\bar{\gamma}_{\theta^{i}_{1}},\ \bar{\gamma}_{\pi-\theta^{i}_{1}}$, as in Lemma \ref{lema1ax}. By (\ref{MGE}), (\ref{RR}) and (\ref{RRR}) we have 
\be\label{RIII}
\lim \tilde{\mathcal M}_{t(\theta_{2}^{i}),t(\theta_{1}^{i})}(\bar{\gamma}^{i})=\lim \tilde{\mathcal M}_{t(\pi-\theta_{1}^{i}),t(\pi-\theta_{2}^{i})}(\bar{\gamma}^{i})=0
\ee
and finally, of course,
\be\label{RIV}
\tilde{\mathcal M}_{t(\theta^{i}_{2}),t(\pi-\theta^{i}_{2})}(\bar{\gamma}^{i}) = \tilde{\mathcal M}_{t(\theta_{2}^{i}),t(\theta_{1}^{i})}(\bar{\gamma}^{i})+ \tilde{\mathcal M}_{t(\pi-\theta_{1}^{i}),t(\pi-\theta_{2}^{i})}(\bar{\gamma}^{i}) + \tilde{\mathcal M}_{t(\theta^{i}_{1}),t(\pi-\theta^{i}_{1})}(\gamma)
\ee
Collecting (\ref{RIV}) and (\ref{rel}) together with the information (\ref{RRR}), (\ref{RII}) and (\ref{RIII}) we obtain,
\be
\lim {\mathcal M}_{\theta_{1}^{i},\pi-\theta_{1}^{i}}({\mathcal D})\geq \lim \bigg[-2\big( \alpha_{i}^{2}-4\big)\ln \tan \frac{\theta_{1}^{i}}{2}-4\sigma\cos\theta\bigg|_{\theta^{i}_{1}}^{\pi-\theta^{i}_{1}}+8\bigg]
\ee
But as $\sigma(\theta_{1}^{i})\rightarrow \sigma_{l}=\frac{1}{2}\ln (4J^{2}+Q^{4})+\Gamma$ with $\lim (\theta^{i}_{1}/2)^{\alpha_{i}^{2}- 4}=e^{\Gamma}$ we obtain, after a cancelation,
\begin{equation}\label{ec13}
{\mathcal M}({\mathcal D})=\lim \mathcal M_{\theta_{1}^{i},\pi-\theta_{1}^{i}}\geq 4\ln(Q^4+4J^2)+8=\mathcal M^0.
\end{equation}
\end{proof}

We present now the proof of Lemma \ref{unic}. This is achieved by making use of the explicit expression for the minimizers of the functional $\mathcal M$ with given boundary conditions found above.

\begin{proof} \textit{(of Lemma \ref{unic})}.
We know that any critical point of ${\mathcal M}$ is represented in terms of a geodesic $\gamma=(\eta,\omega,\psi,\chi)$ of $\chp$. 
Regularity implies $\lim \eta'/\eta=2$ as $\theta$ tends to $0$ or $\pi$. This implies from (\ref{EQuI}) that $\alpha=2$. On the other hand, the boundary data implies  
\begin{equation}\label{c5f} 
c_{1}= \frac{J}{Q^{4}+4J^{2}},\qquad 
c_{5}= \frac{ -Q}{\sqrt{Q^{4}+4J^{2}}},
\end{equation}
 Then, manipulating (\ref{fff}) while using these values for $c_{1}$ and $c_{5}$ one obtains 
\be
f=\arctan \frac{2c\cos\theta}{1-c^{2}\cos^{2}\theta}
\ee
with $c=2J/(\sqrt{4J^{2}+Q^{4}}+Q^{2})$. Plugin this expression in ``the law of the two arcs" (\ref{LTA}) gives (\ref{EKN2})-(\ref{EKN4}) with $Q_{\rm M}=0$. The equation (\ref{EKN1}) is obtained from (\ref{u1}).
\end{proof}

\appendix

\section{Appendix: linking global and quasilocal axisymmetric inequalities}

In this somehow more informal appendix we want to  show that there might exist a link between the  AJQ and MJQ inequalities (see equations \eqref{ine} and \eqref{MJQ}). The MJQ inequality \eqref{MJQ} is a global manifestation of the constraints (in the maximal spatial gauge) and in this sense it is a global inequality requiring knowledge of the system as  a whole. The AJQ inequality \eqref{ine} is instead of a quasilocal nature and does not require global information. Despite of the different realms in which they manifest, they seem to be closely related. The link that we shall establish could be of help to prove quasilocal inequalities in systems other than Einstein-Maxwell, for which a global three dimensional mass functional is shown to exist. More concretely we will point out a relation among (\ref{ine}) and (\ref{MJQ}) by linking the two-dimensional energy functional ${\mathcal{M}}$ given in (\ref{defM}) and a three-dimensional energy functional given in \cite{ChruscielCosta09}-\cite{Costa09} (see below), and whose minimization properties lead to the AJQ and MJQ inequalities respectively.

Let us first put in parallel how one obtains the MJQ and AJQ inequalities from suitable functionals.

The inequality (\ref{MJQ}) has been established in \cite{ChruscielCosta09}-\cite{Costa09} following a similar argument as in \cite{Dainmass08}. The black hole configuration on which (\ref{MJQ}) has been proved is that of an initial datum with two asymptotically flat ends, where $m,J,Q_{\rm E},Q_{\rm M}$ in the inequality (\ref{MJQ}) refers to the mass, angular momentum and charges of a selected end. 
The rationale behind the proof of (\ref{MJQ}) is the following. One introduces a three-dimensional functional $M$, defined on such configurations and bounding the mass (of the selected end) from below, i.e. $m\geq M$. Moreover one has $M\geq M_{0}$ where $M_{0}$ is the infimum of $M$ among those configurations having (for the selected end) $J,Q_{\rm E}$ and $Q_{\rm M}$ fixed. Moreover $M_{0}$ is achieved by the extreme Kerr-Newman solution. The inequalities $m\geq M$ and $M\geq M_{0}$ together with the explicit expression for $M_{0}$ give (\ref{MJQ}). 

On the other hand the rationale behind (\ref{ine}) that we have developed in the previous sections was the following. We introduced a functional ${\mathcal{M}}$ defined on a certain surface (i.e. stable MOTS or stable minimal surface over a maximal slice) and bounding its area from below, more precisely by $A\geq 4\pi e^{({\mathcal{M}}-8)/8}$. Then we showed that the extreme Kerr-Newman sphere realizes the absolute minimum of ${\mathcal{M}}$  among all configurations having $J$, $Q_{\rm E}$ and $Q_{\rm M}$ fixed. Denoting by ${\mathcal{M}}_{0}$ the value of ${\mathcal{M}}$ at the extreme Kerr-Newman sphere we get $A\geq 4\pi e^{({\mathcal{M}}_{0}-8)/8}$ which gives (\ref{ine}). 

It is clear that the two procedures described above are formally similar and we will see that, although both can be carried out without any reference to one another, they are indeed remarkably related.
More precisely we state that the inequality $m\geq M\geq M_{0}$ implies that the extreme Kerr-Newman sphere is a critical point of ${\mathcal{M}}$ (even more, it can be shown from that, that the extreme Kerr-Newman sphere is a local minimum for ${\mathcal{M}}$). However, we do not know at the moment whether the fact that the extreme Kerr-Newman sphere is a global minimizer of ${\mathcal{M}}$  can be established solely from the inequality $m\geq M\geq M_{0}$. This gives a partial connection in the form MJQ $\Rightarrow$ AJQ. In the other direction, namely AJQ $\Rightarrow $ MJQ, we note that with the help of the Penrose inequality $A\leq 16\pi m^{2}$ one obtains for outermost minimal surfaces
\be
m^{2} \geq \frac{A}{16\pi}\geq \frac{4J^{2}+Q^{4}}{4},
\ee

\n which is an inequality slightly worst than (\ref{MJQ}). Despite of these interesting relations many issues on to the link between the inequalities still remain in shadows. 


\vs
 In order to see how the first implication shows up, we begin by defining the three-dimensional potentials $\bar{\mathcal D}=(\bar{\sigma},\bar{\omega},\bar{\psi},\bar{\chi})$ over maximal electrovacuum initial data $(\Sigma, h, K, F)$ and then introduce the functional $M$ together with a crucial minimizing property. We follow \cite{Costa09} on this construction.

We write the spatial metric on the maximal initial datum in the form
\be\label{Metrich}
h=e^{\bar{\sigma}+2\bar{q}}(\frac{d\bar{r}^{2}}{\bar{r}^{2}}+d\theta^{2})+e^{\bar{\sigma}}\sin^{2}\theta (d\varphi +v_{\bar r}d\bar r+v_\theta d\theta)^{2}.
\ee
where $v_{\bar r}, v_\theta$ are functions of $\bar r, \theta$, which spans over $\mathbb{R}^{3}\setminus \{0\}$ and defines $\bar{\sigma}$. For the electromagnetic fields we have the relations 
\begin{gather}
 \label{EB}  E_{a}=F_{ab}n^{b},\qquad B_{a}={}^*\!F_{ab}n^{b},\\
 \label{PC}  \partial_{a}\bar{\chi}:=F_{ab} \eta^{b},\qquad  \partial_{a}\bar{\psi}:={}^*\!F_{ab} \eta^{b},
\end{gather}
where  $n^a$ is the unit normal to $\Sigma$. These expressions define the potentials $\bar{\psi},\ \bar{\chi}$. Finally a potential $\bar{\omega}$ is defined through 
\be
D_{a}\bar{\omega}+2\bar{\chi} D_{a}\bar{\psi}-2\bar{\psi} D_{a}\bar{\chi}:=2\epsilon_{abc}K^{b}_{\ d} \eta^{c}\eta^{d}.
\ee
Observe that because the norm of the axial Killing vector $\eta^a$ is null over the axis then the differential of the potentials $(\bar{\omega},\bar{\psi},\bar{\chi})$ at the axis are zero and therefore their values remain constant all along them. As they are defined up to a constant one can take them to be ``centered", namely, $\bar\omega|_{\theta=\pi}=-\bar\omega|_{\theta=0}$, $\bar\psi|_{\theta=\pi}=-\bar\psi|_{\theta=0}$ and $\bar\chi|_{\theta=\pi}=-\bar\chi|_{\theta=0}$.  

It is interesting and illustrative to see the relation between the potentials $\bar{\mathcal{D}}_{0}=(\bar{\sigma}_{0},\bar{\omega}_{0},\bar{\psi}_{0},\bar{\chi}_{0})$ corresponding to the extreme Kerr-Newman solution over, say, the slice $\{t=0\}$, and the  potentials ${\mathcal D}_{0}=(\sigma_{0},\omega_{0},\psi_{0},\chi_{0})$ defining the extreme Kerr-Newman sphere (\ref{EKN1})-(\ref{EKN4}). The explicit form of the 3-dimensional potentials $\bar{\mathcal D}_{0}$ can be found in \cite{Carter72}  (see pages 197-204) and we have
\begin{gather}
\bar{\sigma}_{0}=\ln \frac{(\bar{r}^{2}-Q^{2}+2m_{0}(\bar{r}+m_{0}))^{2}-\bar{r}^{2}a_{0}^{2}\sin^{2}\theta}{\Sigma},\\ 
2(\bar{\sigma}_{0}+\bar{q}_{0})=\ln \bigg[ (\bar{r}^{2}-Q^{2}+2m_{0}(\bar{r}+m_{0}))^{2}-\bar{r}^{2}a_{0}^{2}\sin^{2}\theta\bigg]
\end{gather}
where $\bar{r}=r-r_{H}=r-m_{0}$. From this we obtain
\begin{gather}
\label{SQ1} \lim_{\bar{r}\rightarrow 0} \bar{\sigma}_{0}(\bar{r},\theta,\varphi)=\sigma_{0}(\theta,\varphi),\\
\label{SQ2} \lim_{\bar{r}\rightarrow 0} 2(\bar{\sigma}_{0}(\bar{r},\theta,\varphi)+\bar{q}_{0}(\bar{r},\theta,\varphi))=\ln 4J^{2}+Q^{4}=\ln \frac{A^{2}}{16\pi^{2}}=2c
\end{gather} 
where $\sigma_{0}$ is given by (\ref{EKN1}) and (as before) $A=4\pi e^{c}$. Together with (\ref{Metrich}) this shows that the $\{(\theta,\varphi)\}$ coordinates on the spheres $\{\bar{r}=\bar{r}_{1}\}$ become, as $\bar{r}_{1}\rightarrow 0$, the unique ones for which the induced metric (over $\{\bar{r}=\bar{r}_{1}\}$) is expressed in the form (\ref{SDE}).
Moreover from (\ref{EB})-(\ref{PC}) it is deduced that over any sphere $\{\bar{r}=\bar{r}_{1}\}$ it holds 
\begin{gather}\label{EBP}
E_{\perp}=\frac{e^{\bar{\sigma}_{0}+\bar{q}_{0}}\partial_{\theta}\bar{\psi}_{0}}{\sin\theta},\qquad B_{\perp}=\frac{e^{\bar{\sigma}_{0}+\bar{q}_{0}}\partial_{\theta}\bar{\chi}_{0}}{\sin\theta}
\end{gather}
As was explained in Section \ref{seckerr},  $E_{\perp}$ and $B_{\perp}$ converge as $\bar{r}_{1}\rightarrow 0$ to those given by (\ref{K3}) and (\ref{K4}) respectively, that is those of the extreme Kerr-Newman sphere. From this, (\ref{defpsichi}) and (\ref{SQ2}), we deduce that the limit of the potentials $\bar{\psi}$ and $\bar{\chi}$ over the spheres $\{\bar{r}=\bar{r}_{1}\}$ converge to (\ref{EKN3}) and (\ref{EKN4}) respectively, that is
\begin{equation}
\lim_{\bar{r}\rightarrow 0} \bar{\psi}(\bar{r},\theta,\varphi)=\psi_{0}(\theta,\varphi),\qquad \lim_{\bar{r}\rightarrow 0} \bar{\chi}(\bar{r},\theta,\varphi)=\chi_{0}(\theta,\varphi)
\end{equation}
The same property is also obtained for $\bar{\omega}_{0}$
\be\label{LPO4}
\lim_{\bar{r}\rightarrow 0} \bar{\omega}_{0}(\bar{r},\theta,\varphi)=\omega_{0}(\theta,\varphi)
\ee

Given the set of 3-dimensional potentials $\bar{\mathcal D}$, we define, as done in \cite{Costa09}, the energy functional $M$ on $\bar{\mathcal D}$ 
\be\label{COSF}
M=\int_{\mathbb{R}^{3}} 4\bigg(|D\bar{U}|^{2}+\frac{e^{4\bar{U}}}{\rho^{4}}|\frac{D \bar{\omega}/2+\bar{\chi} D\bar{\psi}-\bar{\psi} D\bar{\chi}}{2}|^{2}+\frac{e^{2\bar{U}}}{\rho^{2}}(|D\bar{\chi}|^{2}+|D\bar{\psi}|^{2})\bigg)\ dV_{0}.
\ee
\n where $e^{\bar{\sigma}}=e^{-2\bar{U}}\bar{r}^{2}$, $dV_{0}=\bar{r}^{2}\sin\theta d\bar{r} d\theta d\varphi$, $D$ is the Euclidean differential and the norms are Euclidean norms. 

Although in principle the functional $M$ was introduced on axisymmetric maximal initial data with two asymptotically flat ends, we will consider it acting on more general sets of functions $\bar{\mathcal D}(\bar{\sigma},\bar{\omega},\bar{\psi},\bar{\chi})$ with fixed $J,Q_{\rm E}$ and $Q_{\rm M}$, not necessarily arising from the potentials of an initial state. In this setup a key property of $M$, which is deduced from the arguments in \cite{Costa09} is the following. Let $\bar{\mathcal{D}}=(\bar{\sigma},\bar{v},\bar{\chi},\bar{\psi})$ be a  set that is the extreme Kerr-Newman set outside a compact set in $\mathbb{R}^{3}\setminus \{0\}$. Then $M(\bar{\mathcal{D}})\geq M(\bar{\mathcal{D}}_{0})$ where $\bar{\mathcal{D}}_{0}$ is the set for the extreme Kerr-Newman solution. In other words, the extreme Kerr-Newman set is a minimum of $M$ under variations of $\bar{\mathcal D}_{0}$ of compact support. 

We are ready to explain how to deduce that ${\mathcal D}_{0}$ is a critical point for ${\mathcal M}$ from the properties of $M$.
Let ${\mathcal D}=(\sigma,\omega,\psi,\chi)$ be a set on the sphere $S^{2}$ and let ${\mathcal D}_{0}$ be the extreme Kerr-Newman sphere set, both with the same angular momentum and charges $J$, $Q_{\rm E}$ and $Q_{\rm M}$. Define the set for the functional ${\mathcal M}$
\be
{\mathcal D}_{\lambda}=\lambda({\mathcal D}-{\mathcal D}_{0})
\ee

Let $\xi(x)$ be a real function equal to $1$ for $x\leq 0$, equal to $-x+1$ for $x\in [0,1]$ and equal to $0$ for $x\geq 1$. For every $\epsilon>0$ define $\xi_{\epsilon}(\bar{r})=-1/\bar{r}+1/\epsilon$. Finally consider the data for the functional $M$ 
\be
\bar{\mathcal D}_{\lambda,\epsilon}(\bar{r},\theta,\varphi)=\bar{\mathcal D}_{0}(\bar{r},\theta,\varphi)+ {\mathcal D}_{\lambda}(\theta,\phi)
\ee
A long but otherwise straightforward calculation shows
\be
\frac{d {\mathcal M}({\mathcal D}_{\lambda})}{d\lambda}\bigg|_{\lambda=0}=\lim_{\epsilon\rightarrow 0}\frac{1}{\epsilon}\frac{d M(\bar{\mathcal D}_{\lambda,\epsilon})}{d\lambda}\bigg|_{\lambda=0}
\ee
Now, since $M(\bar{\mathcal D} )\geq M(\bar{\mathcal D_0})$ the right hand side is zero for every $\epsilon>0$, therefore the left hand side is zero and, because ${\mathcal D}$ was arbitrary, we conclude that ${\mathcal D}_{0}$ is a critical point of ${\mathcal M}$.

\section*{Acknowledgments}
We are very grateful to Walter Simon and Sergio Dain for helpful discussions.

\end{document}